\providecommand{\U}[1]{\protect\rule{.1in}{.1in}}
\newtheorem{theorem}{Theorem}
\newtheorem{corollary}[theorem]{Corollary}
\newtheorem{definition}[theorem]{Definition}
\newtheorem{lemma}[theorem]{Lemma}
\newtheorem{proposition}[theorem]{Proposition}
\newenvironment{proof}[1][Proof]{\noindent\textbf{#1.} }{\ \rule{0.5em}{0.5em}}
\begin{document}
\title{Local Distinguishability of Generic Unentangled Orthonormal Bases}
\author{Ji\v{r}\'{\i} Lebl}
\thanks{Ji\v{r}\'{\i} Lebl was partially supported by NSF grant DMS-1362337 and
Oklahoma State University's DIG and ASR grants.}
\email{lebl@math.okstate.edu}
\affiliation{Department of Mathematics,  Oklahoma State University, 
Stillwater,  OK 74078,  USA}

\author{Asif Shakeel}
\thanks{Asif Shakeel was partially supported by NSF award PHY-0955518.}
\email{ashakeel@ucsd.edu}
\affiliation{Department of Mathematics,  University of California,  San Diego, 
 La Jolla,  CA 92093-0112,  USA}

\author{Nolan Wallach}
\thanks{Nolan Wallach was partially supported by NSF grant DMS-0963035.}
\email{nwallach@ucsd.edu}
\affiliation{Department of Mathematics,  University of California,  San Diego, 
 La Jolla,  CA 92093-0112,  USA}

\date{January 4, 2016}

\begin{abstract}
An orthonormal basis consisting of unentangled  (pure tensor) elements in  a
tensor product of Hilbert spaces  is an  Unentangled Orthogonal  Basis
(UOB).  In general,  for $n$ qubits,  we prove that in its  natural
structure as a real variety,  the space of UOB is a bouquet of products of
Riemann spheres parametrized by a class of edge colorings of hypercubes. Its
irreducible components of maximum dimension are products of $2^n-1$
two-spheres. Using a theorem of Walgate and Hardy,  we observe that the  UOB
whose elements are distinguishable  by local operations and classical
communication  (called {\it locally distinguishable} or {\it LOCC
distinguishable} UOB)  are exactly those  in the maximum dimensional
components. Bennett et al,  in their  in-depth study of quantum nonlocality
without entanglement, include a specific $3$ qubit example UOB which is not
LOCC   distinguishable; we construct certain generalized counterparts of
this UOB in $n$ qubits.
\end{abstract}

\maketitle

\section{Introduction} \label{sec:intro}
Quantum nonlocality through entanglement plays a  key role as a resource in
quantum  teleportation,  cryptography and error-correcting codes. There
exists,  however,  another  nonlocal phenomenon: quantum nonlocality without
entanglement, studied at length by Bennett, DiVincenzo, Fuchs, Mor, Rains,
Shor, Smolin, and Wootters in~\cite{bdf:qnwe}.   Locality in this sense
refers to the elements of  an unentangled orthogonal basis (UOB) being
distinguishable by a protocol using only  local operations by participants
(each holding one tensor factor) and classical communication (LOCC) among
them, hence such a UOB is {\it locally  distinguishable} or {\it LOCC
distinguishable}. In~\cite{bdf:qnwe}, the authors  provide examples of sets
of unentangled states that are not LOCC (locally)  distinguishable and
therefore exhibit nonlocality, give   measurement protocols for  their
optimal distinguishability,  state preparation protocols to obtain them,
relation to quantum cryptography, and measures to quantify their
nonlocality. This form of nonlocality  is connected with construction of
entangled states~\cite{dms:upbbe},  but  stands on its own as
well~\cite{bdf:qnwe,  scj:lsmco,  wh:nadbs}.   Under protocols in which
various parties can only measure  their own systems (local measurements)
and classically communicate,  distinguishing among certain unentangled
states is impossible. Thus,  such states encode quantum nonlocality.  It  is
demonstrably useful in quantum key distribution (QKD), as first observed by
Goldenberg and Vaidman~\cite{vg:qcbos}, and since used in other schemes for
secure communication~\cite{gls:nqkds,h:qkdosbc}.

The converse is to find   sets of states that are identifiable through LOCC. A significant  body of work is involved with  the criteria for recognizing and constructing  such states,  particularly pertaining  the  Unextendible Product Basis (UPB)~\cite{bdf:qnwe,  scj:lsmco,  dms:upbbe, cj:msupbbc}, but there have also been some  characterizations of   Unentangled Orthogonal  Basis  (UOB) and of the form of nonlocality in unentangled settings in higher dimensions~\cite{zgt:nopbqs,fs:cliops}. In this paper we analyze the set of orthonormal bases consisting of unentangled states (UOB) in $n$ qubits. We show that in the natural structure of  UOB as an algebraic variety over $\mathbb{R}$,  the ones that can be distinguished by LOCC are precisely those belonging to  the irreducible components of highest dimension.

The organization of this paper is as follows. In Section~\ref{sec:background}, we motivate and give an initial set of definitions that connect the orthognality condition of a UOB to a set of colorings of a hypercube that we call  {\it admissible}. In Section~\ref{sec:mainsec}, we prove the main result that relates maximum number of colors in an admissible coloring of a hypercube to the maximum dimensional component in the set of UOB,   and describe its implications. Section~\ref{sec:loccdist} discusses the LOCC distinguishability of the UOB, and through the theorem of Walgate and Hardy~\cite{wh:nadbs}, shows that the maximum dimensional component is the unique such set. In Section~\ref{sec:nonlocc} we construct UOB not distinguishable by LOCC, hence exhibiting the said nonlocality, and important from the secure communication point of view.  Section~\ref{sec:disc} discusses the directions for further research and certain open questions.

\section{Admissible Colorings and Orthogonality} \label{sec:background}

We denote by $\mathcal{H}_{n}$ the space of $n$ qubits that is
${\otimes}^{n}\mathbb{C}^{2}$ with the tensor product Hilbert
structure,  $\left\langle \ldots|\ldots\right\rangle $. A state in $\mathcal{H}_{n}$
is called a product state or unentangled state if it is a tensor product
of unit vectors in each $\mathbb{C}^{2}$. We note that two product states $v_{1}\otimes
v_{2}\otimes\cdots\otimes v_{n}$ and $w_{1}\otimes w_{2}\otimes\cdots\otimes
w_{n}$ satisfy
\begin{equation*}
\left\langle v_{1}\otimes v_{2}\otimes\cdots\otimes v_{n}|w_{1}\otimes
w_{2}\otimes\cdots\otimes w_{n}\right\rangle =0
\end{equation*}
if and only if there is at least one $i$ with $\left\langle v_{i}
|w_{i}\right\rangle =0$. Since states are determined up to phase,   to think
about them unambiguously we must consider them to be elements of the
corresponding projective space. If $z\in\mathbb{C}^{2}$ is non-zero
then we  assign to $z$ the complex line $[z]$ through $0$ and $z$. The
totality of elements $[z]$,  $z\in\mathbb{C}^{2}-\{0\}$ is denoted (as
usual) as $\mathbb{P}^{1}$ (one-dimensional projective space over $\mathbb{C}
$). Up to phase,  the element $v_{1}\otimes v_{2}\otimes\cdots\otimes v_{n}$ 
is  considered to be $[v_{1}]\otimes\lbrack v_{2}]\otimes\cdots
\otimes\lbrack v_{n}]$. On $\mathbb{P}^{1}$ we define a real analytic fixed
point free involution:
\begin{equation*}
\lbrack v]\longmapsto\lbrack\hat{v}], 
\end{equation*}
which assigns to $[v]$ the line $[\hat{v}]$ perpendicular to it (i.e.
$\left\langle v|\hat{v}\right\rangle =0$). If $S$ is a subset of $\mathbb{P}^1$ then $\hat{S}$ denotes the set of $[\hat{s}]$ for $[s]$ in $S$.

Our first goal is to turn the determination of all UOB into a
combinatorial problem on the hypercube $Q_{n}$. We think of the vertices of
the hypercube as the vectors in $\mathbb{R}^{n}$ with coordinates in the set
$\{0, 1\}^{n}$,  and consider this to be     binary expansions of  numbers
$0, 1, \ldots, 2^{n-1}$. We also view  $Q_{n}$ as a graph with vertices
$0, 1, \ldots, 2^{n-1}$; its edges are the pairs of numbers whose binary
expansions differ in exactly one digit (i.e. pairs with Hamming distance $1$).

Let $u_{0}, u_{1}, \ldots, u_{2^{n}-1}$ be a UOB,  and write  its states as
\begin{equation*}
\lbrack u_{j}]=[u_{1j}]\otimes\lbrack u_{2j}]\otimes\cdots\otimes\lbrack
u_{nj}].
\end{equation*}
As observed above,  if $i\neq j$ then at least one pair $\{[u_{ki}], [u_{kj}]\}$
must be of the form $\{[v], [\hat{v}]\}$. We consider the subset of
$\mathbb{P}^{1}$ that is the set $\mathcal{T=}\{[u_{kj}
]|k=1, \ldots, n, j=0, \ldots, 2^{n}-1\}$. We divide $\mathcal{T}$ into two disjoint
pieces $\mathcal{T}_{0}$ and $\mathcal{T}_{1}$ such that $\mathcal{\hat{T}
}_{i}\cap\mathcal{T}_{i}=\emptyset$,  $i=0, 1$. This implies that if
$[t]\in\mathcal{T}_{0}$ and if $[\hat{t}]\in\mathcal{T}$ then $[\hat{t}
]\in\mathcal{T}_{1}$ and vice-versa. To each $[u_{j}]$ we assign a vector
$s_{j}\in\mathbb{R}^{n}$ such that its $k$--th coordinate is $0$ if
$[u_{kj}]\in\mathcal{T}_{0}$ or $1$ if $[u_{kj}]\in\mathcal{T}_{1}.$ We note
that if we assign to $s_{j}$ the corresponding element
\begin{equation*}
\left [  s_{j}\right\rangle =\left [  s_{1j}s_{2j}\ldots s_{nj}\right\rangle, 
\end{equation*}
then by its very definition $\{\left [  s_{j}\right\rangle |j=0, \ldots , 2^{n}-1\}$
is an orthonormal set. This implies that the two sets $\mathcal{T}_{0}$ and
$\mathcal{T}_{1}$ each consist of exactly half of the elements of $\mathcal{T}
$ and that $\mathcal{T}_{1}=\mathcal{\hat{T}}_{0}$. Reordering $\mathcal{T}$, 
let $\mathcal{T}_{0}=\{t_{1}, \ldots , t_{r}\}$,  such that $s_{j}$ is just
the binary expansion of $j$. Assume a palette of colors
$c_{1}, c_{2}, \ldots, c_{r}, \ldots$ is available. From this palette,  we  assign to each vertex of
$Q_{n}$ an n--tuple of colors taken from $c_{j}$ with $1\leq j\leq r$,  such that if the $i$--factor
of $u_{j}$ is $t_{j}$ or $\hat{t}_{j}$,  we assign to it the color $c_{j}$. This
is equivalent to coloring the edges of $Q_{n}$. Indeed,  let $a\frac{\qquad}
{{}}b$ be an edge,  so $a$ and $b$ differ in exactly one component,  which,  by orthonormality,  has the same  color in both $a$ and $b$. We give the edge $a\frac{\qquad}
{{}}b$ 
that color. Conversely,  given  an edge-coloring of $Q_{n}$,  we can assign  an $n$-tuple of 
colors to each vertex as follows. For the  vertex $a$ and component $i$,   let $a^{i}$ be the unique vertex with all its components the same as those of $a$ except for the $i$--th which is
opposite. We assign the $i$--th component of vertex $a$ the color of edge $a\frac{\qquad}{{}}a^{i}$.

\begin{definition}
A coloring of $Q_{n}$ is said to be admissible if for every pair of vertices
there is a component,  $i$,  so that one vertex has a $0$ in the $i$--th position
and the other has a $1$ and  both are assigned
the same color in that position.
\end{definition}

If we have a coloring of $Q_{n}$ with colors $c_{1}, \ldots, c_{k}$ and
$[u_{1}], \ldots, [u_{k}]$ are elements of \thinspace$\mathbb{P}^{1}$ then we
assign to each vertex $s=s_{1}s_{2}\ldots s_{n}$ a product state (up to phase):
 if the $i$--component has color $c_{r}$ and $s_{i}=0$ then put
$[u_{r}]$ in the $i$--th position; if $s_{i}=1$ put $[\hat{u}_{r}]$ in the
$i$--th position. For example,  for  $n=3$ we have the admissible coloring
given in FIG.~\ref{fig1}.
\begin{figure}[ht]
\includegraphics[width=1.5in]{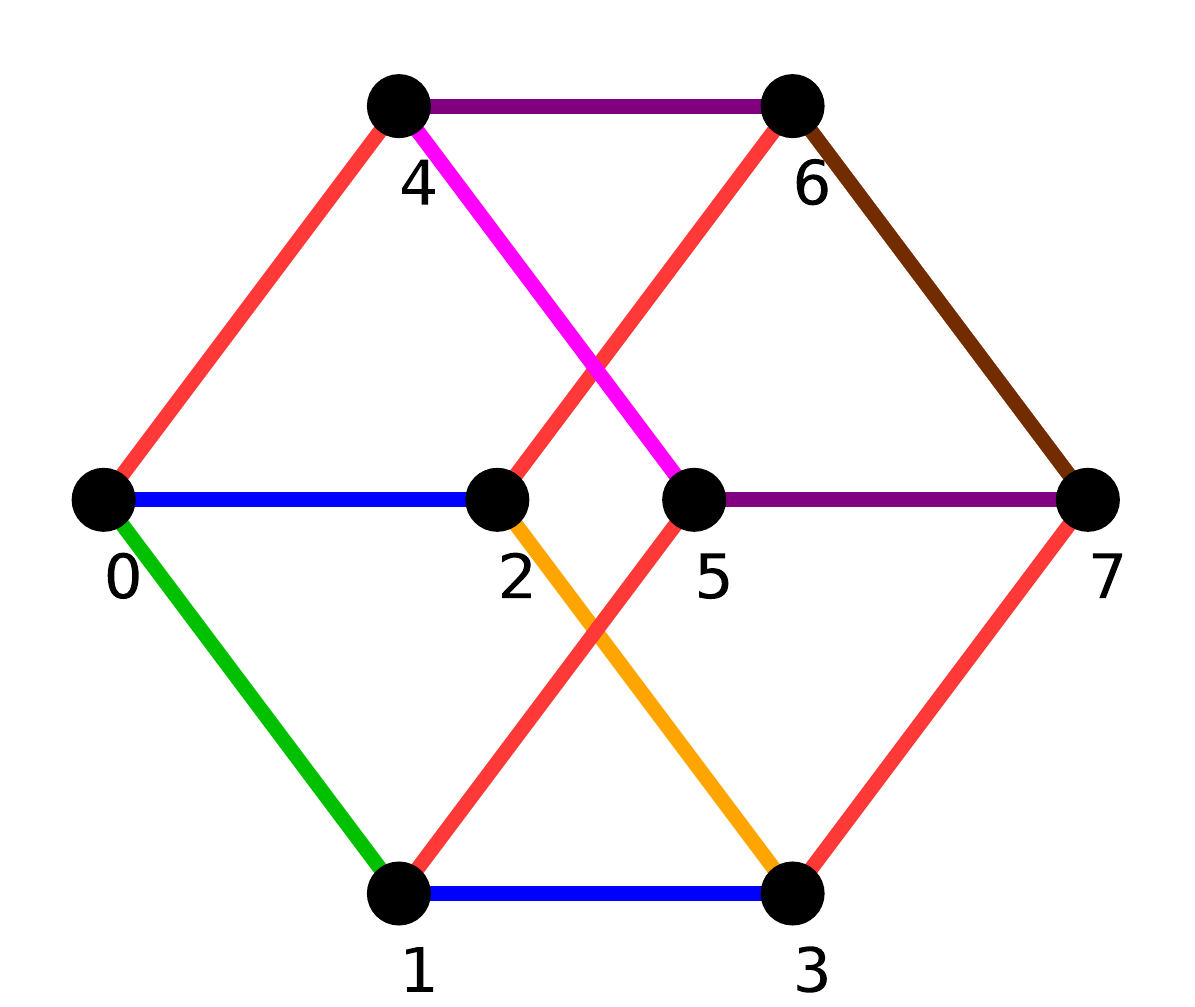}
\caption{\label{fig1} (Color online)
Admissible coloring with 7 colors. The edges
0-4, 1-5, 3-7, 2-6 are red, the edges 0-2, 1-3 are blue, the edges
4-6, 5-7 are violet, the edges 0-1, 4-5, 6-7, and 2-3 are respectively
green, purple, brown, and orange.}
\end{figure}

Here $c_{1}=\mathrm{green}$,  $c_{2}=\mathrm{blue}$,  $c_{3}=\mathrm{red}
$,  $c_{4}=\mathrm{orange}$,  $c_{5}=\mathrm{purple}$,
$c_{6}=\mathrm{violet}$ and
$c_{7}=\mathrm{brown}$. The procedure assigns the UOB:
\begin{equation} \label{basisvec}
\begin{aligned}
&
[u_{3}]\otimes[u_{2}]\otimes[u_{1}], \quad
[u_{3}]\otimes[u_{2}]\otimes[\hat{u}_{1}], 
\\
&
[u_{3}]\otimes[\hat{u}_{2}]\otimes[u_{4}], \quad
[u_{3}]\otimes[\hat{u}_{2}]\otimes[\hat{u}_{4}], 
\\
&
[\hat{u}_{3}]\otimes[ u_{5}]\otimes[u_{6}], \quad
[\hat{u}_{3}]\otimes[ u_{5}]\otimes[\hat{u}_{6}], 
\\
&
[\hat{u}_{3}]\otimes[\hat{u}_{5}]\otimes[u_{7}], \quad
[\hat{u}_{3}]\otimes[\hat{u}_{5}]\otimes[\hat{u}_{7}].
\end{aligned}
\end{equation}
We give the set of UOB of $\mathcal{H}_{n}$,  $\mathcal{U}_{n}$,  its subspace
topology in the set of $2^{n}$--tuples of elements of the projective space on
$\mathcal{H}_{n}$,  $\mathbb{P (}\mathcal{H}_{n})$.

\section{Maximum Dimensional Component and Maximal Colorings} \label{sec:mainsec}

In this section, we obtain an interesting connection between  the admissible colorings and the maximum dimensional component. This comes about through an elegant structure of the admissible colorings when viewed as a combinatorial forest.

\begin{proposition}
Fix a pallette of colors $c_{1}, \ldots, c_{k}, \ldots$ To each admissible coloring, 
$C$,  of $Q_{n}$ with $k$ colors the procedure above yields an injective, 
continuous mapping
\begin{equation*}
\Phi_{C} \colon \left (  \mathbb{P}^{1}\right)  ^{k}\rightarrow\mathcal{U}_{n}.
\end{equation*}
The union of the images of  $\Phi_{C}$ running through all admissible
colorings is all of $\mathcal{U}_{n}$.
\end{proposition}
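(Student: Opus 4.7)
My plan is to verify in turn the four properties packaged into the statement: well-definedness of $\Phi_C$ (i.e. its output really is a UOB), continuity, injectivity, and surjectivity of the union over all admissible colorings. Each part is built from the bookkeeping already set up in Section~\ref{sec:background}, so the work is really to confirm that the dictionary between colorings and UOB translates faithfully in both directions.

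For well-definedness, fix an admissible coloring $C$ with colors $c_1,\ldots,c_k$ and a point $([u_1],\ldots,[u_k])\in(\mathbb{P}^1)^k$. The recipe assigns to each vertex $s$ of $Q_n$ a product of unit lines in $\mathbb{P}^1$, so I only need orthogonality of the resulting $2^n$ product states. Given two distinct vertices $s\neq s'$, admissibility supplies a coordinate $i$ in which $s_i\neq s'_i$ while the two incident edge-colorings assign the same color $c_r$ to position $i$. By construction the $i$-th factor of $[u_s]$ is $[u_r]$ and that of $[u_{s'}]$ is $[\hat u_r]$ (or vice versa), and the orthogonality relation $\langle v_i|w_i\rangle=0$ recalled at the start of Section~\ref{sec:background} forces $\langle u_s|u_{s'}\rangle=0$. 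Continuity of $\Phi_C$ then follows because each factor is either the identity or the real-analytic involution $[v]\mapsto[\hat v]$ applied to one of the coordinates on $(\mathbb{P}^1)^k$.

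For injectivity, observe that each color $c_r$ must appear on at least one edge of $Q_n$; pick such an edge $a\mathrel{\text{---}}b$, which by construction differs in a single coordinate $i$. The $i$-th factor of the state at the endpoint whose $i$-th coordinate is $0$ is precisely $[u_r]$, while the $i$-th factor at the other endpoint is $[\hat u_r]$. Hence $[u_r]$ is recoverable from the image $\Phi_C([u_1],\ldots,[u_k])$, so $\Phi_C$ is injective.

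The final step, that the images cover $\mathcal{U}_n$, is essentially what the construction preceding the proposition already does. Given an arbitrary UOB $u_0,\ldots,u_{2^n-1}$, the excerpt constructs the set $\mathcal{T}$, partitions it as $\mathcal{T}_0\cup\mathcal{T}_1=\mathcal{T}_0\cup\hat{\mathcal{T}}_0$, lists $\mathcal{T}_0=\{t_1,\ldots,t_r\}$, and colors the edges of $Q_n$ accordingly. What I need to verify (and what I expect to be the crux) is that this coloring is admissible: for any two vertices $s_i\neq s_j$, orthogonality $\langle u_i|u_j\rangle=0$ forces some coordinate $k$ with $\{[u_{ki}],[u_{kj}]\}=\{[v],[\hat v]\}$, and in the partition this pair is exactly $\{[t_\ell],[\hat t_\ell]\}$ for a single $\ell$; hence the two vertices differ in coordinate $k$ but share color $c_\ell$ there, which is the admissibility condition. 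With admissibility in hand, taking $[u_r]=[t_r]$ for $r=1,\ldots,k$ gives a point of $(\mathbb{P}^1)^k$ whose image under $\Phi_C$ is the original UOB, completing the surjectivity onto $\mathcal{U}_n$ after ranging over all admissible $C$.
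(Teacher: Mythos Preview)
The paper does not supply a separate proof of this proposition; it is stated as an immediate consequence of the dictionary between UOB and admissible colorings built in Section~\ref{sec:background}, and your proposal is precisely a careful unpacking of that dictionary, so the approach is the same. Your verifications of well-definedness, continuity, injectivity, and surjectivity are all correct; the only point worth flagging is the tacit reindexing in the surjectivity step (the paper's ``such that $s_j$ is just the binary expansion of $j$''), which you inherit from the text and which amounts to permuting the given UOB so that its elements are labeled by the vertices of $Q_n$ before reading off the coloring.
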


For each coloring $C$ the map $\Phi_{C}$ is a homeomorphism onto
its image. Thus $\mathcal{U}_{n}$ is a finite union of smooth manifolds
diffeomorphic with $\left (  \mathbb{P}^{1}\right)  ^{k}$ for $k$ running
through the cardinalities of admissible colorings of $Q_{n}$. We introduce a partial order on the set of colorings of $Q_n$.

\begin{definition}
If $C_{1}, C_{2}$ are colorings of $Q_{n}$ then $C_{1}\prec C_{2}$ 
if the colors used in $C_{1}$ form a subset,  $S$,  of
those used in $C_{2}$ and the set of edges that were colored in $C_{2}$ by color $c \notin S$ all have their color replaced by a color in $S$.

\end{definition}

\begin{lemma}
Up to changing the names of the admissible colors $C_{1}\prec C_{2}$ if and only if the
image of $\Phi_{C_{1}}$ is contained in that of $\Phi_{C_{2}}$.
\end{lemma}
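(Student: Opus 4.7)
The plan is to identify the relation $C_1\prec C_2$ with a concrete \emph{color merging}: after relabeling so that the colors of $C_1$ form a subset $S$ of the colors of $C_2$, the relation $C_1\prec C_2$ amounts to the existence of a surjection $\sigma$ from the $C_2$-colors onto $S$, fixing $S$ pointwise, such that every edge of $C_2$-color $c_i$ carries $C_1$-color $c_{\sigma(i)}$. Both directions of the lemma then translate directly into statements about $\sigma$.

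For ($\Rightarrow$), given such a $\sigma$ and any $([u_j])_{j\in S}\in(\mathbb{P}^1)^{|S|}$, I would define $[v_i]:=[u_{\sigma(i)}]$ for every $C_2$-color $c_i$ and check $\Phi_{C_2}([v_i])=\Phi_{C_1}([u_j])$ coordinatewise at each vertex. At vertex $a$ and coordinate $\ell$, the edge joining $a$ to $a^{\ell}$ carries some $C_2$-color $c_i$ and hence $C_1$-color $c_{\sigma(i)}$; then $\Phi_{C_2}$ places $[v_i]$ or $[\hat v_i]$ in slot $\ell$ according to the bit $a_\ell$, while $\Phi_{C_1}$ places $[u_{\sigma(i)}]$ or $[\hat u_{\sigma(i)}]$ under the same convention. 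These coincide because $[v_i]=[u_{\sigma(i)}]$.

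For ($\Leftarrow$), assume $\mathrm{Im}(\Phi_{C_1})\subseteq\mathrm{Im}(\Phi_{C_2})$. Choose a \emph{generic} $m$-tuple $([u_1],\ldots,[u_m])$, meaning the $[u_j]$'s are pairwise non-equal and pairwise non-orthogonal. By the hypothesis and the injectivity of $\Phi_{C_2}$, there is a unique $k$-tuple $([v_1],\ldots,[v_k])$ with $\Phi_{C_2}([v_i])=\Phi_{C_1}([u_j])$. Each $[v_i]$ appears as a factor of the UOB and so must coincide with some $[u_j]$ or $[\hat u_j]$; genericity makes this $j=:\sigma(i)$ uniquely determined. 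Now, for any edge $e$ with $C_2$-color $c_i$ and $C_1$-color $c_{j(e)}$, the corresponding UOB factor forces $[u_{j(e)}]\in\{[v_i],[\hat v_i]\}=\{[u_{\sigma(i)}],[\hat u_{\sigma(i)}]\}$; genericity then gives $j(e)=\sigma(i)$, independent of $e$. Hence $\sigma$ descends to a well-defined color-level merging witnessing $C_1\prec C_2$ after the natural relabeling.

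The main obstacle, as one might expect, lies in the reverse direction: one must upgrade an a priori edge-by-edge identification into a single, globally consistent color-level merging $\sigma$. This is exactly where the genericity of the $[u_j]$'s is needed, and throughout this step one has to keep careful track of the perpendicular involution $[u]\leftrightarrow[\hat u]$ so that the $\{[v_i],[\hat v_i]\}$ pairs are matched to the $\{[u_{\sigma(i)}],[\hat u_{\sigma(i)}]\}$ pairs consistently at every edge of a given $C_2$-color.
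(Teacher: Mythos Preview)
The paper states this lemma without proof, so there is no argument to compare against; your proposal supplies what the paper omits, and the overall strategy is sound. The forward direction is exactly right: given the merging map $\sigma$, setting $[v_i]:=[u_{\sigma(i)}]$ makes $\Phi_{C_2}([v_i])$ and $\Phi_{C_1}([u_j])$ agree at every vertex and coordinate by construction. For the reverse direction, your use of a generic tuple together with the injectivity of $\Phi_{C_2}$ (stated in the preceding Proposition) to produce a well-defined $\sigma$ is the natural argument, and the relabeling of $C_1$-colors via a section of $\sigma$ recovers the subset $S$ required by the definition of $\prec$.

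One simplification: you do not actually need the pairwise non-orthogonality hypothesis on the $[u_j]$'s, only their pairwise distinctness. At any edge $e=(s,s^\ell)$ of $C_2$-color $c_i$, take the endpoint with $s_\ell=0$; then the $\ell$-th factor of $\Phi_{C_2}$ at $s$ is $[v_i]$ (not $[\hat v_i]$), while the $\ell$-th factor of $\Phi_{C_1}$ at $s$ is $[u_{j(e)}]$ (not $[\hat u_{j(e)}]$), because both maps use the same bit-convention. Hence $[v_i]=[u_{j(e)}]$ on the nose, and distinctness of the $[u_j]$'s already forces $j(e)$ to depend only on $i$. So the careful bookkeeping of the involution $[u]\leftrightarrow[\hat u]$ that you flag as the main obstacle is in fact unnecessary here; the hats never enter the comparison once you consistently read off factors at the $0$-bit endpoint.
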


We  make some observations about this ordering. If $C$ is a coloring of
$Q_{n}$ let $C (i)$ denote the colors of the edges with vertices that differ in
the $i$--th position. We  change the colors of each $C (i)$ so that $C (i)\cap C (j)=\emptyset$ if $i\neq j$. Thus in a maximal coloring every vertex has
$n$ distinct colors. There is a unique minimal coloring (up
to changing the names of the colors): the coloring with one color. This
coloring yields the tensor product of the standard orthogonal bases of
$\mathbb{C}^{2}$.

We will see in Theorem~\ref{main}  below that the admissible coloring of $Q_{3}$ above is maximal and has the maximum
number of colors,  $7$.  This
implies that $\mathcal{U}_{3}$ can be thought of as a bouquet of some fourteen
dimensional real manifolds and some lower dimensional ones corresponding to maximal
colorings with less than $7$ colors.  In FIG.~\ref{fig2} is an example of a maximal coloring
of $Q_{3}$ with $6$ colors.
\begin{figure}[ht]
\includegraphics[width=1.5in]{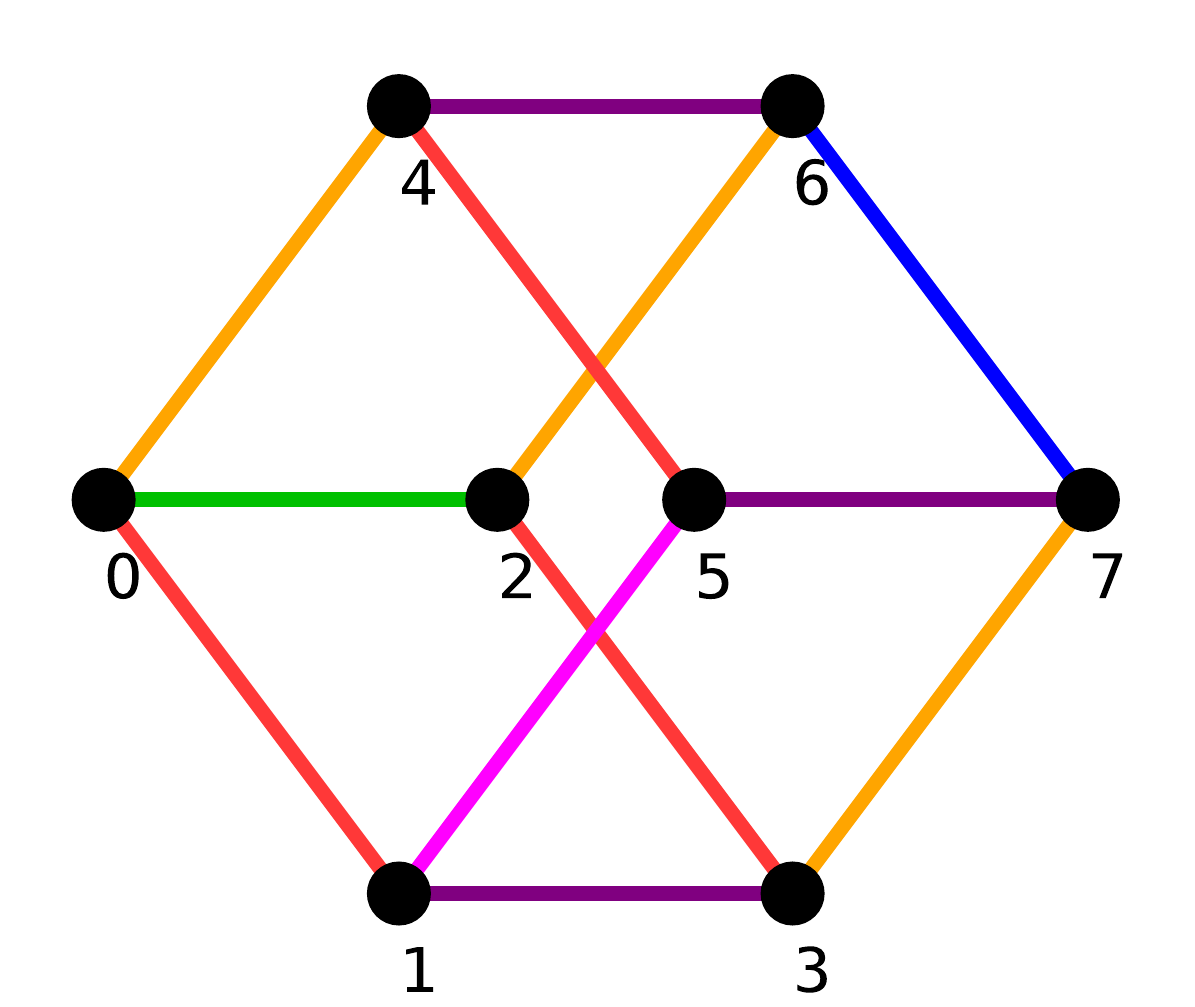}
\caption{\label{fig2} (Color online)
Admissible coloring with 6 colors.
The edges 0-4, 3-7, 2-6 are orange, the edges 0-1, 4-5, 2-3 are red,
the edges 4-6, 5-7, 1-3 are violet, and the edges
1-5, 0-2, and 6-7 are respectively purple, green, and blue.}
\end{figure}

The figure corresponds to the UOB:
\begin{equation*}
\begin{aligned}
&
[u_{3}]\otimes [u_{2}]\otimes [u_{1}], \quad
[u_{5}]\otimes [u_{4}]\otimes [\hat{u}_{1}], 
\\
&
[u_{3}]\otimes[\hat{u}_{2}]\otimes[u_{1}], \quad
[u_{3}]\otimes[\hat{u}_{4}]\otimes[\hat{u}_{1}], 
\\
&
[\hat{u}_{3}]\otimes[u_{4}]\otimes[u_{1}], \quad
[\hat{u}_{5}]\otimes[u_{4}]\otimes[\hat{u}_{1}], 
\\
&
[\hat{u}_{3}]\otimes[\hat{u}_{4}]\otimes[u_{6}], \quad
[\hat{u}_{3}]\otimes[\hat{u}_{4}]\otimes[\hat{u}_{6}].
\end{aligned}
\end{equation*}
With specific choices of $u_1,\ldots,u_6$, this example appears in~\cite{bdf:qnwe}.

In preparation for our main theorem we give a recursive algorithm for
admissibly coloring $Q_{n}$ with $2^{n}-1$ colors,  which the theorem asserts is
the maximum number. Also Theorem~\ref{theorem:maxcolors} implies this is the only way,  up to
permuting indices,  to color $Q_{n}$ admissibly with $2^{n}-1$ colors. 

\begin{lemma}\label{algoritheorem}
Let $C_{0}$ and $C_{1}$ be admissible colorings of $Q_{n-1}$. Writing
$Q_{n}$ as $0\times Q_{n-1}\cup1\times Q_{n-1}$ and choosing a new color $c$ then we color $Q_n$ as follows:
all first coordinates are colored with color $c$ if the first index is $0$ (respectively $1$) then the rest of the indices are colored as in 
$C_{0}$ (resp. $C_{1}$). This recipe yields an
admissible coloring. In particular,  if $C_{0}$ and $C_{1}$ both use 
$2^{n-1}-1$ colors without any repetitions between the colors,  then the number
of colors is $2^{n}-1$ for the coloring of $Q_{n}$.
\end{lemma}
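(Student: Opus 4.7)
The plan is to verify directly that the described recipe produces a well-defined edge coloring of $Q_n$, then establish admissibility by a short case analysis on pairs of vertices, and finally read off the color count.

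First, I would observe that the edges of $Q_n$ split into three disjoint classes: the $2^{n-1}$ edges that flip the first coordinate, namely those joining $(0,v)$ to $(1,v)$ for $v \in Q_{n-1}$; the edges lying inside $\{0\} \times Q_{n-1}$; and the edges lying inside $\{1\} \times Q_{n-1}$. The recipe assigns color $c$ to every edge of the first kind, colors the second class according to $C_0$, and the third according to $C_1$, so every edge of $Q_n$ receives a well-defined color. Reading off the induced $n$-tuple of colors at a vertex $a = (a_0,a') \in Q_n$: the first slot is always $c$ (it is the color of the unique edge out of $a$ flipping coordinate $1$), while the remaining $n-1$ slots are precisely the $(n-1)$-tuple that $C_{a_0}$ assigns to $a' \in Q_{n-1}$.

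Second, I would check admissibility. Let $a = (a_0,a')$ and $b = (b_0,b')$ be distinct vertices of $Q_n$. If $a_0 \neq b_0$, then $a$ and $b$ differ in the first coordinate and both carry color $c$ in that slot, so admissibility holds at position $1$. If $a_0 = b_0$, then necessarily $a' \neq b'$ in $Q_{n-1}$; admissibility of $C_{a_0}$ yields an index $i \in \{1,\ldots,n-1\}$ such that $a'$ and $b'$ differ in position $i$ and share the same color there. By the inheritance observed above, $a$ and $b$ then differ in position $i+1$ of $Q_n$ and have the same color there.

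Finally, under the disjointness hypothesis of the last sentence, the three palettes $\{c\}$, $\mathrm{colors}(C_0)$, and $\mathrm{colors}(C_1)$ are pairwise disjoint, giving $1 + (2^{n-1}-1) + (2^{n-1}-1) = 2^n - 1$ colors total. The argument is largely bookkeeping; the only point that requires a little care is recognizing that the first coordinate of every vertex is colored by $c$ (because every edge in the direction of coordinate $1$ receives $c$), which is what makes the $a_0 \neq b_0$ case of admissibility work uniformly.
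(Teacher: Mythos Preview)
Your argument is correct: the case split on whether the first coordinates agree, together with the inherited admissibility of $C_0$ or $C_1$, is exactly what is needed, and the color count is immediate from disjointness. The paper in fact states this lemma without proof, so there is nothing to compare against; you have simply written out the routine verification the authors left to the reader.
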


In FIG.~\ref{fig3} is an example of this method for $Q_{5}$ (it uses the algorithm
starting with the $Q_{3}$ example above with $7$ colors to get a $Q_{4}$
coloring with $15$ colors and then another application to get $31$ colors).
\begin{figure}[ht]
\includegraphics[width=1.5in]{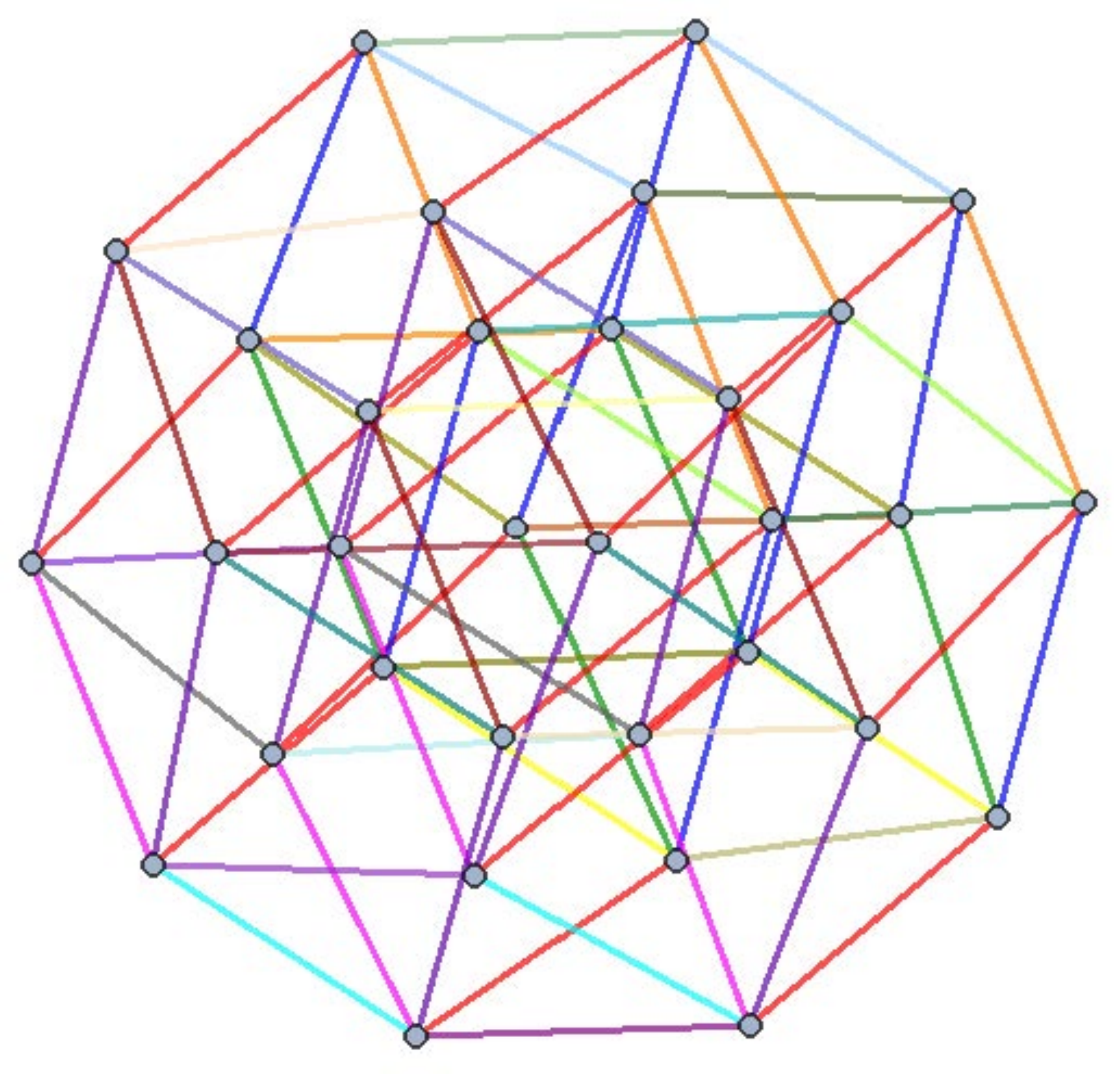}
\caption{\label{fig3} (Color online) Admissible coloring of $Q_5$ with 31 colors.}
\end{figure}

In the proof of the following result we will only use the admissibility of
every 2-face of an admissible coloring.

\begin{theorem}\label{main}
\begin{enumerate}[label= (\roman{*})] 
\item \label{main1} Let $Q_{n}$ be admissibly colored. Then there exists a subforest $F$ (i.e. a
subgraph with no circuits) of $Q_{n}$ that has edges of every possible color
in $Q_{n}$.
\item \label{main2}The maximum number of colors in an admissible coloring of $Q_{n}$ is $2^{n}-1
$.
\item \label{main3} $Q_{n}$ is admissibly colored with $2^{n}-1$ colors if and only if some
forest in $Q_{n}$ containing all of its colors each exactly once is a tree
that contains all  the vertices of $Q_{n}.$
\item \label{main4} If $Q_n$ is admissibly colored with $2^n-1$ colors then every subcube $Q_m$ where $m<n$ is also admissibly colored with 
$2^{m}-1$ colors.
\end{enumerate}
\end{theorem}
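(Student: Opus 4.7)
The plan is to prove (i) first, deduce (ii) and (iii) from edge-counting, and handle (iv) by a separate induction built on (iii).

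For part (i), I would proceed by induction on $n$, with the base case $n = 1$ being trivial. For the inductive step, decompose $Q_n = H_0 \cup H_1$ into two $Q_{n-1}$ hyperplanes joined by the direction-$1$ crossing edges. The restriction of the coloring to each $H_i$ is admissible, so by induction each hyperplane contains a rainbow subforest covering all colors appearing in it. To extend the disjoint union $F_0 \cup F_1$ and catch the crossing-only colors $D = C(Q_n) \setminus (C(H_0) \cup C(H_1))$, one selects a representative crossing edge for each $d \in D$. Viewing the components of $F_0 \cup F_1$ as vertices of a bipartite auxiliary graph whose edges are the direction-$1$ crossings (inheriting their colors), the selection should form a rainbow forest in this auxiliary graph containing every $d \in D$. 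Existence of such a selection reduces to a Hall-type / matroid-intersection criterion, which ultimately follows from $2$-face admissibility correlating the colors of direction-$1$ edges with the colors of parallel edges in $H_0, H_1$.

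Part (ii) is immediate: a rainbow subforest on $2^n$ vertices has at most $2^n - 1$ edges, so the number $k$ of colors satisfies $k \le 2^n - 1$. Part (iii) is the tightness case. If $k = 2^n - 1$, the rainbow subforest from (i) has $2^n - 1$ edges on $2^n$ vertices and therefore must be a spanning tree. Conversely, a rainbow forest that is a spanning tree exhibits $2^n - 1$ distinct colors, and combining with (ii) forces equality.

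For part (iv), induction on $n - m$ reduces the problem to the hyperplane case $m = n - 1$: if $Q_n$ has a $(2^n - 1)$-color admissible coloring, then each hyperplane $H$ uses exactly $2^{n-1} - 1$ colors. Admissibility restricts automatically to $H$, and $|C(H)| \le 2^{n-1} - 1$ by (ii). For the reverse inequality, use the rainbow spanning tree $T$ from (iii). For each non-tree edge $e$, the fundamental cycle $\gamma_e = T$-path $+\, e$ interacts with $H$ in a way governed by $2$-face admissibility applied to $2$-faces straddling the hyperplane. Whenever a tree edge $t$ lies outside $H$ but the color $c(t)$ is needed on $H$, one can locate a non-tree edge $e_t \subseteq H$ with $c(e_t) = c(t)$, thereby transferring that color to $H$. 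Summing the contributions from tree edges lying inside $H$ with the transferred colors yields $|C(H)| \ge 2^{n-1} - 1$.

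The principal obstacle is part (i): the bipartite selection step has to avoid cycles, and a naive greedy choice can fail already in small examples where several crossing-only colors connect the same pair of super-vertices. Extracting the Hall-type condition from $2$-face admissibility -- perhaps with a coordinated modification of the subforests $F_0, F_1$ -- is the heart of the matter. Part (iv) is the next-hardest step, since a rainbow spanning tree of $Q_n$ typically does not restrict to a spanning tree of a hyperplane, so the transfer of colors via $2$-face-admissibility witnesses must be executed systematically rather than by a single restriction argument.
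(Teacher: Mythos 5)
Your reduction of (ii) and (iii) to (i) matches the paper's, and your hyperplane decomposition for (i) is the right starting point, but the proposal stops exactly where the work is: the ``Hall-type / matroid-intersection criterion'' is never established, and in fact no such selection principle is needed. The paper's resolution is different and simpler. After obtaining, by induction, subforests $F \subset Q_n^0$ and $G \subset Q_n^1$ carrying all colors of their respective hyperplanes, it first \emph{deletes from $G$ every edge whose color already occurs in $F$}; since $F$ carries every color of $Q_n^0$, every surviving edge of $G$ now has a color different from that of \emph{every} edge of $Q_n^0$, not merely from the edges of $F$. Then one adjoins, for each color still missing, \emph{an arbitrary} vertical (crossing) edge of that color, one edge per color. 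Acyclicity is then automatic: a cycle would contain an arc $v_1, q_1, \ldots, q_k, v_2$ with $v_1, v_2$ vertical and $q_1 \rightarrow \cdots \rightarrow q_k$ a path in the pruned $G$; projecting this arc down to $Q_n^0$ produces a ladder of $2$-faces in which each top edge (in pruned $G$) and the bottom edge beneath it have different colors, so $2$-face admissibility forces the two rungs of each square to share a color, and chaining along the ladder gives $c(v_1)=c(v_2)$ --- contradicting that the adjoined vertical edges were chosen with pairwise distinct colors. Without the pruning step the ladder argument fails (a top edge could share its color with the bottom edge, letting the rungs differ), which is precisely why a naive selection runs into the trouble you describe; with it, the ``coordinated modification'' you were hoping for is just this deletion, and the choice of crossing edges can be completely arbitrary.

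Your part (iv) is also not yet a proof: the fundamental-cycle ``color transfer'' is unsubstantiated, and a counting argument of the kind you sketch cannot succeed, since a bound on $|C(H_0)\cup C(H_1)|$ does not control either hyperplane individually. The paper instead extracts from the construction in (i) a structural property of the resulting rainbow spanning tree $T$: no path in $T$ starts in $F$, passes through $G$, and returns to $F$. Hence any $T$-path between two vertices of $F$ stays in $F$, so $F = T \cap Q_n^0$ is connected, and (iii) applied to $Q_n^0$ (whose inherited coloring is admissible, as any two of its vertices must be separated in a non-vertical coordinate) finishes the argument. You should aim to prove and use that no-return property rather than attempt a transfer of colors across the hyperplane.
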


\begin{proof}
We first show how one can derive~\ref{main2} and~\ref{main3} from~\ref{main1}. To prove~\ref{main2},  we note that if a forest consists of $k$
disjoint trees and $m$ vertices then the number of edges is at most $m-k$.
Thus if $F$ is the forest asserted in~\ref{main1} then $m\leq2^{n}$.
As the number of colors is at most the number of its edges,  we have that  the number of
colors is at most $2^{n}-k$,  with $k$ the number of connected components
 (disjoint trees). This proves~\ref{main2}.

To prove~\ref{main3},  consider $F$,   a subforest of $Q_{n}$ containing $2^{n}-1$
edges. Then it must contain at least $2^{n}$ vertices and the number of
connected components is $1$. If $Q_{n}$ is admissibly colored and if $F$ is a
tree containing all of its colors each exactly once and all of the vertices of
$Q_{n}$ then since the number of edges is $2^{n}-1$,  that must be the number of colors.

We now prove~\ref{main1} by induction on $n$. If $n=1, 2$,  the result is
obvious. So we assume~\ref{main1} for $n-1\geq2$ and prove the result for $n$. Let
$Q_{n}^{j}$ be the set of elements of $Q_{n}$ with first coordinate $j$ with
$j=0$ or $1$. We take each to be an $n-1$ subcube and give each the coloring
that it inherits from $Q_{n}$. The inductive hypothesis implies that for each of 
these cubes there is respectively a sub-forest $F\subset Q_{n}^{0}$ and
$G\subset Q_{n}^{1}$ as in~\ref{main1}. From $G$ we delete all the edges with colors
that are in $F$. We now take $H$ to be $F\cup G$ with a subset of edges not in
the $Q_{n}^{j}$ (we call such edges vertical) adjoined that contain all of
the colors of $Q_{n}$ not contained in $F\cup G$ each exactly once. If we show
that $H$ has no cycles then~\ref{main1} is proved. Suppose on the contrary there is
a cycle in $H$. Then it cannot stay in $F$ and verticle edges or in $G$ and
vertical edges. Thus we may assume that it starts in $F$ at $p_{1}$
immediately goes vertical along $v_{1}$ then passes through $q_{1}
, q_{2}, \ldots, q_{k}$ \ in $G$ and then goes vertical along the edge $v_{2}$ which
connects to $q\in F$ . The circuit may not be as yet closed but we now show
that this is enough for a contradiction. In fact,  we show that $v_{1}$ and
$v_{2}$ must have the same color. Indeed,  consider the following diagram:
\begin{equation*}
\begin{array}
[c]{ccccccccccc}
& q_{1} & \rightarrow & q_{2} & \rightarrow & q_{3} & \cdots & \cdots &
q_{k-1} & \rightarrow & q_{k}\\
v_{1} & \uparrow & w_{1} & \uparrow & w_{2} & \uparrow & \cdots & w_{k-2} &
\uparrow & v_{2} & \uparrow\\
& p_{1} & \rightarrow & p_{2} & \rightarrow & p_{3} & \cdots & \cdots &
p_{k-1} & \rightarrow & q
\end{array}
.
\end{equation*}
In this diagram only the $q_{i}, p_{1}, q_{1}$ are guaranteed to be vertices in
$H$ and only $v_{1}$ and $v_{2}$ are vertical edges in $H$. However,  each of
the
\begin{equation*}
\begin{array}
[c]{cccc}
q_{i} & \rightarrow & q_{i+1} & \\
\uparrow & w_{i} & \uparrow & w_{i+1}\\
p_{i} & \rightarrow & p_{i_{+1}} &
\end{array}
\end{equation*}
is a $2$ dimensional subcube of $Q_{n}$. Since the edge $q_{i}
\rightarrow q_{i+1}$ is in $G$ and $p_{i}\rightarrow p_{i+1}$ is an edge of
$Q_{n}^{0}$,  the two edges have different colors. this implies that $w_{i}$ and
$w_{i+1}$have the same color (by admissibility). The argument applies to the
first and last square also so we see that $v_{1}$ and $v_{2}$ have the same
color contrary to the choice of edges to include.

Before we prove~\ref{main4} we recall a property of the forest $T$ that was found in
the proof of~\ref{main3}. There is no path in $T$ that starts in $F$ continues in $G$
and returns to $F$. We now prove~\ref{main4}.  We note that it is enough to prove this for codimension one
subcubes with the inherited coloring. If we choose one such subcube we  rotate it
so that it is $Q_{n}^{0}$. We now consider the forests $T$ and $F$. Since
$Q_{n}$ has $2^{n}-1$ colors $T$ must be connected. According to~\ref{main3} we will be
done if we show that $F$ is connected. To prove this we consider $x, y$
vertices in $F$. Since $T$ is connected there must be a path from $x$ to $y$
in $T.$ This path cannot leave $F$ and return to $F$. Thus it
stays in $F.$
\end{proof}

\begin{theorem} \label{theorem:maxcolors}
Let $Q_{n}$ be admissibly colored with $2^{n}-1$ distinct colors. Then there exists a direction for which all $2^{n-1}$ edges in that direction have the same color.
\end{theorem}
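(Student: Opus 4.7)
The plan is to prove the theorem by induction on $n$, using Theorem~\ref{main}(iii) to fix a distinct-color spanning tree $T$ for the max admissible coloring and letting $t_i$ denote the number of direction-$i$ edges in $T$; the conclusion reduces to showing that some $t_i = 1$, via the equivalence $t_i = 1 \iff$ direction $i$ is monochromatic.

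For the equivalence, the non-trivial direction is $t_i = 1 \Rightarrow$ mono. Removing the unique direction-$i$ tree edge $e$ of color $c$ splits $T$ into two subtrees that must span $F_{i, 0}$ and $F_{i, 1}$ respectively. By Theorem~\ref{main}(iv) each face has exactly $2^{n-1} - 1$ colors, which coincide with the $2^{n-1} - 1$ tree colors of each spanning subtree, so the color sets of $F_{i, 0}$ and $F_{i, 1}$ are disjoint and avoid $c$. For any non-tree direction-$i$ edge $e'$ sharing a 2-face with a direction-$i$ edge already shown to have color $c$, the three-colors-per-2-face property (from Theorem~\ref{main}(iv) applied to 2-faces) forces one face direction to be locally monochromatic; were it the perpendicular direction, its two edges (one in each half) would share a color, violating disjointness. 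Hence direction $i$ is the 2-face's mono direction and $e'$ has color $c$, and iteration along tree distances in the two subtrees gives every direction-$i$ edge color $c$.

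For the inductive step I induct on $n$. The bases $n = 1, 2$ are immediate; $n = 3$ follows by a direct case analysis on the partition of direction-$1$ colors, using admissibility of the antipodal pair together with the three-colors-per-2-face property to propagate forced equalities that reduce the color count below $7$ unless some direction is monochromatic. For $n \geq 4$, Theorem~\ref{main}(iv) makes each codim-$1$ face $F_{d, b}$ max admissibly colored with $2^{n-1} - 1$ colors, so by the inductive hypothesis each has a unique mono direction $\mu(d, b)$ with color $\kappa(d, b)$ (uniqueness because two mono directions would yield a 2-face with only two colors). The key pairwise compatibility is: for $d_1 \neq d_2$, if $\mu(d_1, b_1) \neq d_2$ and $\mu(d_2, b_2) \neq d_1$, then the 2-face in the intersection $F_{d_1, b_1} \cap F_{d_2, b_2}$ spanning directions $\{\mu(d_1, b_1), \mu(d_2, b_2)\}$ (which exists because the intersection is codimension $2$ and $n - 2 \geq 2$) has its unique mono direction determined simultaneously by each face, forcing $\mu(d_1, b_1) = \mu(d_2, b_2)$ with matching colors.

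This compatibility forces $\mu(\cdot, 0)$ to have a ``star'' structure (for $n \geq 4$ no derangement satisfies the constraint): there exist $v_0, u_0$ with $\mu(d, 0) = v_0$ for all $d \neq v_0$ and $\mu(v_0, 0) = u_0 \neq v_0$; similarly $\mu(\cdot, 1)$ has a star structure $(v_1, u_1)$. Cross-compatibility applied to any $d_1, d_2 \in \{1, \ldots, n\} \setminus \{v_0, v_1\}$ with $d_1 \neq d_2$ (two such elements exist precisely when $v_0 = v_1$) gives $v_0 = v_1$; writing $v := v_0 = v_1$, cross-compatibility further forces $\kappa(d, 0) = \kappa(d, 1) = c$ for all $d \neq v$. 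Every direction-$v$ edge lies in some $F_{d, b}$ with $d \neq v$ and so carries color $c$, making direction $v$ monochromatic in $Q_n$. The main obstacle is establishing the star-structure classification rigorously---proving that the pairwise constraint forbids derangements for $n \geq 4$ and that cross-compatibility must pin down a common apex $v$ with matching colors on both sides.
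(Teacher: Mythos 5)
Your approach is correct in outline but genuinely different from the paper's. Both arguments induct on $n$ and use Theorem~\ref{main}~\ref{main4} to apply the inductive hypothesis to codimension-one faces; however, the paper applies it only to the bottom face $Q^{(0)}$, extracts one monochromatic direction there, and then propagates that color upward through a chain of $3$-subcubes (using the observation that in a $7$-colored $Q_3$, three same-colored edges in one direction force the fourth). You instead apply the inductive hypothesis to all $2n$ facets, record the monochromatic direction $\mu(d,b)$ and its color $\kappa(d,b)$ for each, and obtain a global monochromatic direction by classifying the assignments compatible with the pairwise constraint coming from $3$-colored $2$-faces. Your paragraph establishing $t_i=1\iff$ monochromatic is correct (the disjointness of the two facets' color sets is exactly right), but it ends up unused by the actual induction in your third paragraph and can be dropped. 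One small slip: the parenthetical ``two such elements exist precisely when $v_0=v_1$'' is backwards --- $\{1,\dots,n\}\setminus\{v_0,v_1\}$ has at least $n-2\ge 2$ elements for $n\ge 4$ regardless, and that is precisely what lets you \emph{derive} $v_0=v_1$.

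The step you flag as the main obstacle --- the star-structure classification --- does hold for $n\ge 4$ and closes in a few lines, so there is no fatal gap. The constraint is: $\mu(d)\ne d$, and for $d_1\ne d_2$ either $\mu(d_1)=d_2$, or $\mu(d_2)=d_1$, or $\mu(d_1)=\mu(d_2)$. Set $a_2=\mu(a_1)$, $a_3=\mu(a_2)$. If $a_3=a_1$, then for every $c\notin\{a_1,a_2\}$ the pair $(a_1,c)$ forces $\mu(c)\in\{a_1,a_2\}$; any two such $c,d$ (which exist since $n\ge4$) satisfy $\mu(c)\ne d$ and $\mu(d)\ne c$, hence $\mu(c)=\mu(d)$, and all of these together with $a_2$ map to a common apex. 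If $a_1,a_2,a_3$ are distinct, then for $d\notin\{a_1,a_2,a_3\}$ the pairs $(a_1,d)$ and $(a_2,d)$ force $\mu(d)\in\{a_1,a_2\}\cap\{a_2,a_3\}=\{a_2\}$, and the pair $(a_3,d)$ rules out $\mu(a_3)=a_1$, so $\mu(a_3)=a_2$ and the map is a star with apex $a_2$. (The $3$-cycle shows this fails for $n=3$, so your separate $n=3$ base case is genuinely needed --- the paper likewise disposes of $n=2,3$ by direct inspection.) With that lemma supplied, your argument is complete; it trades the paper's propagation-by-adjacency for an extra combinatorial classification, but gives a cleaner global picture of how the facet data fit together.
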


\begin{proof}
We first note that the theorem can be proved directly for $n=2, 3$. We also
observe that if $Q_{3}$ is colored admissibly with $7$ colors then if $3$ out
of $4$ of the edges in the same direction have the same color then so does the
fourth. The proof is by induction. Suppose $n\geq4$ and the lemma is true for
$Q_{n-1}$. We suppose that we have a maximal coloring of $Q_{n}$ with $2^{n}-1$ distinct colors. As before, 
let us split the $Q_{n}$ into two $n-1$ dimensional subcubes,  the top and the
bottom. Let us call them $Q^{ (0)}$ for the bottom and $Q^{ (1)}$ for the top.
The edges between them we call vertical. If all the vertical edges are of the
same color,  we are done. So suppose that there are at least two distinct
colors on the vertical edges. Let us call the vertical direction the
\emph{$x_{n}$-direction},  taking the naming convention as if the cube was
embedded in ${\mathbb{R}}^{n}$ with vertices $\{0, 1\}^{n}$.

The inductive hypothesis implies that there exists some direction,  let us call
it the \emph{$x_{1}$-direction},  in which all the edges in $Q^{ (0)}$ are of
the same color,  let us say the color \emph{red}. We wish to show that all the
edges in the $x_{1}$-direction in $Q^{ (1)}$ are also red. Since not all
vertical edges are of the same color,  there must exist some 3 dimensional
subcube $Q^{\prime}$ of $Q_{n}$,  which has edges in the $x_{1}$-direction,  the
vertical $x_{n}$-direction,  and some other third direction $x_{j}$,  such that
not all vertical edges in $Q^{\prime}$ are of the same color. The cube
$Q^{\prime}$ has the maximum,  7,  colors,  therefore one of its directions has all
edges of the same color. It cannot be the $x_{j}$-direction because the
$x_{1}$-direction bottom edges are red,  so we cannot have the two bottom
$x_{j}$-direction edges also of the same color by Theorem~\ref{main}~\ref{main4}. (we would
have a face with only 2 colors on a maximally colored 3-cube). Our choice of
$Q^{\prime}$ implies that it is not the vertical $x_{n}$-direction that has
all the same color. Hence all the $x_{1}$-direction edges in $Q^{\prime}$ are
of the same color,  and so they are all red.

Next pick an \textquotedblleft adjacent\textquotedblright\ cube $Q^{\prime
\prime}$ with edges in the $x_{1}$-direction,  $x_{n}$-direction and $x_{k}%
$-direction for some $k$,  such that $Q^{\prime\prime}$ and $Q^{\prime}$ share
an $ (x_{1}, x_{n})$-face. The two bottom edges in the $x_{1}$-direction in
$Q^{\prime\prime}$ are red,  and also the two edges in the $x_{1}%
$-direction on the face it shares with $Q^{\prime}$ are red. So $Q^{\prime
\prime}$ has at least 3 red edges in the $x_{1}$-direction,  and as it is
 colored with the maximum,  7,  colors,  all edges in the $x_{1}$-direction in $Q^{\prime\prime}$ are
red. We repeat this procedure until we have shown that all edges in the
$x_{1}$-direction in the top cube $Q^{ (1)}$ are red completing the proof.
\end{proof}

At this point we see that up to permuting the components of $Q_n$ (and then putting them back in order of the algorithm), Lemma~\ref{algoritheorem} yields all colorings with a maximum number of colors. Thus in our description of the set of all UOB as a bouquet of products of $\mathbb{P}^1$ given by the maps $\Phi_C$ for an admissible coloring of $Q_n$ the components of highest dimension ($2^{n+1}-2$) are described up to permutation of factors and order as the images of  $\Phi_C$ with $C$ given by the algorithm. Thus we have
\begin{theorem} \label{thmonecolor}
The irreducible components of maximum dimension of the variety of UOB are up to permutation of factors the images of $\Phi_C$  with $C$ given by the algorithm in Lemma~\ref{algoritheorem}.  In fact,  after
reordering factors we can write such a component as
\begin{equation*}
\mathcal{B} = 
\{
[a] \otimes \mathcal{B}_1 ,  [\widehat{a}] \otimes \mathcal{B}_2
\} , 
\end{equation*}
where $\mathcal{B}_i, i=1, 2$ are images of $\Phi_{C_i, i=1, 2}$ respectively with $C_1, C_2$ colorings of $Q_{n-1}$ given by the algorithm in Lemma~\ref{algoritheorem}.
\end{theorem}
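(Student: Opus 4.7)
The plan is to assemble the pieces already proved. The proposition at the opening of Section~\ref{sec:mainsec} identifies $\mathcal{U}_n$ as a finite union of smooth manifolds, each the image of some $\Phi_C$ and diffeomorphic to $(\mathbb{P}^1)^k$, where $k$ is the number of colors in $C$. Hence a component has real dimension $2k$, and by Theorem~\ref{main}~\ref{main2} the maximum value of $k$ is $2^n-1$, so each maximum-dimensional component has dimension $2^{n+1}-2$ and comes from an admissible $(2^n-1)$-coloring of $Q_n$. Thus the theorem reduces to classifying those colorings, and then reading off the bouquet form from the structure of the classification.

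For the classification I would induct on $n$, with $n=1$ trivial. Given an admissible $(2^n-1)$-coloring $C$ of $Q_n$, Theorem~\ref{theorem:maxcolors} supplies a direction in which all $2^{n-1}$ parallel edges carry a single color $c$; after permuting tensor factors, I take this to be the last coordinate, so $Q_n = 0 \times Q_{n-1} \cup 1 \times Q_{n-1}$ with every vertical edge of color $c$. Applying Theorem~\ref{main}~\ref{main4} to the two codimension-one subcubes produces admissible colorings $C_1, C_2$ of $Q_{n-1}$ using exactly $2^{n-1}-1$ colors each. The count
\begin{equation*}
2^n - 1 \;=\; |\mathrm{colors}(C)| \;\leq\; 1 + (2^{n-1}-1) + (2^{n-1}-1) \;=\; 2^n-1
\end{equation*}
is saturated, so the three color sets $\{c\}$, $\mathrm{colors}(C_1)$, $\mathrm{colors}(C_2)$ are pairwise disjoint. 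Therefore $C$ is precisely the coloring that Lemma~\ref{algoritheorem} assigns to $(C_1, C_2, c)$, and the inductive hypothesis applied to $C_1$ and $C_2$ shows that $C$ itself is built by iterating the algorithm.

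The bouquet decomposition then falls out of this construction. The $\mathbb{P}^1$-factor of $(\mathbb{P}^1)^{2^n-1}$ corresponding to the vertical color $c$ contributes a pair $([a],[\hat{a}])$, placed in the $n$-th slot as $[a]$ over vertices of $0 \times Q_{n-1}$ and as $[\hat{a}]$ over vertices of $1 \times Q_{n-1}$. The remaining $2(2^{n-1}-1)$ factors split according to the disjointness of $\mathrm{colors}(C_1)$ and $\mathrm{colors}(C_2)$, yielding $\mathcal{B}_i = \mathrm{Im}(\Phi_{C_i})$ for $i=1,2$, each by induction a maximum-dimensional component in $\mathcal{U}_{n-1}$. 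This produces $\mathcal{B} = \{[a] \otimes \mathcal{B}_1,\ [\hat{a}] \otimes \mathcal{B}_2\}$, as stated. The main delicacy I anticipate is purely bookkeeping: Theorem~\ref{theorem:maxcolors} supplies a monochromatic direction only up to the symmetry group of $Q_n$, so I must carefully track the sequence of tensor-factor permutations and color renamings that align the inductive output with the recursive recipe of Lemma~\ref{algoritheorem}.
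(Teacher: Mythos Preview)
Your proposal is correct and follows essentially the same route as the paper: the paper does not give a separate formal proof but simply observes in the paragraph preceding the theorem that, by Theorem~\ref{theorem:maxcolors} (and implicitly Theorem~\ref{main}~\ref{main4}), every admissible $(2^n-1)$-coloring arises, up to permuting components, from the recursive recipe of Lemma~\ref{algoritheorem}, whence the stated decomposition follows. Your write-up makes explicit the induction and the color-disjointness count that the paper leaves to the reader, but the logical structure is identical.
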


\section{Distinguishability by local operations and classical communication} \label{sec:loccdist}
We now consider  LOCC distinguishability of elements of an $n$--qubit UOB. We are given  an unknown $n$--qubit state in a UOB,   and allowed a  protocol in which we can perform a  sequence of   local operations, that is, unitary transformations and local measurements on  qubits,  where the  choice of which qubit to measure at each step can  depend on the outcomes of the previous measurements (classical communication). We ask if this LOCC information  can determine with certainty which basis element was presented. Let us  consider  the two families of UOB in three
qubits corresponding to the first two displayed colorings above. The first is an example of a
coloring,  $C$,  with the maximum,  $7$,  colors. We consider the corresponding
bases,   of the form $\Phi_{C} ([u_{1}], \ldots, [u_{7}])$,  as in eq.~\eqref{basisvec},   and look at the basis state
$[u_{3}]\otimes\lbrack\hat{u}_{2}]\otimes\lbrack u_{4}]$. We note that if the first measurement
is in the first qubit (after applying the local unitary transformation taking $\left\vert {0}\right\rangle $ and $\left\vert {1}\right\rangle $  to $[u_{3}]$ and  $[\hat{u}_{3}]$ respectively),  then the outcome is $[u_{3}]$ with certainty.  From a second measurement in the second qubit (after applying the local unitary transformation taking $\left\vert {0}\right\rangle $ and $\left\vert {1}\right\rangle $  to $[u_{2}]$ and  $[\hat{u}_{2}]$ respectively), the outcome is 
 $[\hat{u}_{2}]$ with certainty. Similarly the measurement in the third qubit must be $[u_{4}]$  with certainty. We therefore have the correct state with certainty. Notice that the order of measurement is critical. We now consider the second example which is a maximal coloring of $Q_3$ using 6 colors. This example appears in~\cite{bdf:qnwe},  where it is shown that there is no ordered set of local transformations and measurements for the UOB of the form 
$\Phi_{C} (u_{1}, \ldots, u_{6})$,   with  $[u_i]\neq[ u_j]$,  that will determine a basis element with certainty.

Theorem~\ref{thmonecolor} implies that the discussion above for $\Phi_{C} (u_{1}
, \ldots, u_{2^{n}-1})$ for an admissible coloring of $Q_{n}$ with $2^n-1$ colors will work as long as the order is adapted to the algorithm,  in Lemma~\ref{algoritheorem},  that is used to
construct the coloring. Theorem $1$ of Walgate and Hardy~\cite{wh:nadbs} now implies 
that if $C$ is a maximal coloring of $Q_{n}$ with $k<2^{n}-1$ colors then
there is no such ordered set  of measurements that will identify with certainty a
specific state in $\Phi_{C} (u_{1}, \ldots, u_{k})$,  if all of the $u_{i}$ that
appear in a given factor are distinct. 

Distinguishability by LOCC is  also called {\it local} distinguishability~\cite{wh:nadbs}. We formally define it in the spirit of~\cite{wh:nadbs}.

\begin{definition} A UOB is locally distinguishable if there exists an ordering of tensor factors $(1,\ldots,n)$, and a sequence of measurements on respective tensor factors $\{M_1,\ldots,M_n\}$ such that:
\begin{enumerate}
\item $M_i$ for $i>1$ is a function of the outcomes of previous measurement results $\{r_j\}_{j=1,\ldots,i-1}$   from respective measurements $\{M_j\}_{j=1,\ldots,i-1}$.
\item The results $(r_1,\ldots,r_n)$  identify the  basis element of the UOB  on which the measurement is performed.
\end{enumerate}
\end{definition}

We restate Theorem $1$ in~\cite{wh:nadbs}  (with slight notational change). In this theorem, ``going first" refers  to the party (tensor factor) performing the first  measurement. 
\begin{theorem}[Walgate and Hardy]\label{thmwh}
Alice and Bob share a quantum system $\mathbb{C}^2\otimes\mathbb{C}^n$: Alice has a qubit, and Bob an $n$-dimensional
system that may be entangled with that qubit.
If Alice goes first, a set of $l$ orthogonal states $\{\psi_i\}_{i=1\ldots l}$ is
exactly locally distinguishable if and only if there is an orthogonal basis $\{a,\hat a\}$ for Alice's qubit, and orthonormal sets,  $\{\eta_a^i\}_{i=1\ldots l}$ and  $\{\eta_{\hat a}^i\}_{i=1\ldots l}$, in  Bob's system $\mathbb{C}^n$, such that:
\begin{equation} \label{locdis}
\psi_i = a\otimes \eta_a^i + \hat{a} \otimes \eta_{\hat a}^i
\end{equation}
\end{theorem}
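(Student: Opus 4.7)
The plan is to prove the biconditional by treating each direction separately, exploiting the fact that Alice's system is only a single qubit so her informative measurements live in a two-dimensional space.

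For the sufficiency (``if'') direction, I would suppose the decomposition $\psi_i = a\otimes\eta_a^i + \hat{a}\otimes\eta_{\hat a}^i$ holds and then exhibit the obvious two-round LOCC protocol. Alice projectively measures her qubit in the basis $\{a,\hat{a}\}$ and announces the outcome. Conditional on outcome $a$, Bob's subsystem has collapsed (up to a normalization) to $\eta_a^i$; conditional on $\hat{a}$, to $\eta_{\hat a}^i$. In either case Bob holds an element of a known orthonormal set, so he can identify $i$ by projecting onto any orthonormal basis of $\mathbb{C}^n$ extending that set. Combined with Alice's announced result, this determines $\psi_i$ with certainty.

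For the necessity (``only if'') direction, I would suppose $\{\psi_i\}$ is exactly locally distinguishable with Alice going first and extract the decomposition. The first step is to reduce Alice's opening operation to a rank-one projective measurement in some orthonormal basis $\{a,\hat a\}$ of $\mathbb{C}^2$. Given that reduction, expand each state in the block form $\psi_i = a\otimes\alpha_i + \hat{a}\otimes\beta_i$. After Alice's outcome $a$, Bob is left with the (unnormalized) vector $\alpha_i$; for him to identify $i$ among the branch-consistent possibilities by any further local protocol on $\mathbb{C}^n$, the nonzero $\alpha_i$ must be pairwise orthogonal, and symmetrically the nonzero $\beta_i$ must be pairwise orthogonal. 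Normalizing (and relabeling) yields the required orthonormal sets $\eta_a^i$ and $\eta_{\hat a}^i$, reconstructing the claimed decomposition.

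The main obstacle is justifying the rank-one projective reduction on Alice's side. A completely general POVM, possibly with many outcomes or implemented through an ancilla with continuous weak measurements, must be collapsed to a sharp two-outcome projective measurement while preserving exact distinguishability. The key idea is that exact distinguishability propagates backward through the protocol tree: if Alice's first operation has Kraus operators $\{A_k\}$, then within each outcome branch the conditional states $(A_k\otimes I)\psi_i$ must remain mutually orthogonal in order for any subsequent LOCC protocol to succeed with certainty. On a two-dimensional space this rigidity forces each nontrivial $A_k$ to be proportional to a rank-one projector, and by grouping or relabeling outcomes one may replace Alice's measurement by a sharp projective measurement $\{|a\rangle\langle a|,|\hat a\rangle\langle \hat a|\}$. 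Once this reduction is secure, the remainder of the proof is a direct algebraic unpacking of what orthogonality means in the $2\times n$ block decomposition of the $\psi_i$.
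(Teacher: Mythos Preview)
The paper does not prove this theorem; it is quoted (with minor notational changes) from Walgate and Hardy and then used as a black box to derive the subsequent corollary. There is therefore no proof in the paper to compare your proposal against.

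On the merits of your argument: the sufficiency direction is correct and standard. In the necessity direction your overall strategy is the right one, but the reduction step is mis-stated. You claim that orthogonality of the post-measurement states $(A_k\otimes I)\psi_i$ forces each nontrivial Kraus operator $A_k$ to be proportional to a rank-one projector; that is not what actually happens, and it is not needed. Write $A_k^\dagger A_k = \lambda\,|a\rangle\langle a| + \mu\,|\hat a\rangle\langle \hat a|$ in its eigenbasis and expand $\psi_i = a\otimes\alpha_i + \hat a\otimes\beta_i$. The post-measurement orthogonality reads $\lambda\langle\alpha_i|\alpha_j\rangle + \mu\langle\beta_i|\beta_j\rangle = 0$, and together with the original orthogonality $\langle\alpha_i|\alpha_j\rangle + \langle\beta_i|\beta_j\rangle = 0$ this already forces $\langle\alpha_i|\alpha_j\rangle = \langle\beta_i|\beta_j\rangle = 0$ whenever $\lambda\neq\mu$. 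So the required basis $\{a,\hat a\}$ is the \emph{eigenbasis} of the first $A_k^\dagger A_k$ with unequal eigenvalues; $A_k$ can be full rank, and there is no need to ``replace'' Alice's POVM by a projective measurement. What does require an argument is that some such $A_k$ with $\lambda\neq\mu$ must occur among Alice's operations---if every $A_k^\dagger A_k$ were proportional to the identity she would never extract information and could not meaningfully ``go first.'' That is the substantive step in Walgate--Hardy, and your sketch waves at it but does not supply it.
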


\begin{corollary}
A  UOB  is locally distinguishable if and only if it is from  the family of UOB with  maximal dimension.
\end{corollary}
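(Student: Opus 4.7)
The plan is induction on the number of qubits $n$. The base $n=1$ is trivial: $\mathcal{U}_1=\mathbb{P}^1$ is the unique (and hence maximum-dimensional) component, and every orthonormal basis of $\mathbb{C}^2$ is locally distinguishable by a single measurement, so both sides of the equivalence hold automatically.

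Assume the equivalence for $n-1$ qubits. For the ``$\Leftarrow$'' direction, Theorem~\ref{thmonecolor} writes any maximum-dimensional UOB as $\mathcal{B}=\{[a]\otimes\mathcal{B}_1,\,[\hat a]\otimes\mathcal{B}_2\}$ with $\mathcal{B}_1,\mathcal{B}_2$ maximum-dimensional UOBs on $n-1$ qubits. After reordering so that the singled-out factor is measured first, Alice measures factor $1$ in the orthonormal basis $\{a,\hat a\}$; each outcome occurs with certainty for each basis element and identifies which of $\mathcal{B}_1$ or $\mathcal{B}_2$ the remaining $(n-1)$-qubit state lies in, to which the inductive LOCC protocol then applies.

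The ``$\Rightarrow$'' direction uses Theorem~\ref{thmwh}. Permute so that the measurement ordering is $(1,\ldots,n)$, and apply Walgate--Hardy with factor $1$ as Alice's qubit: there is an orthonormal basis $\{a,\hat a\}$ so that $\psi_i=a\otimes\eta_a^i+\hat a\otimes\eta_{\hat a}^i$ for each UOB element, with $\{\eta_a^i\}_i$ and $\{\eta_{\hat a}^i\}_i$ orthogonal families. Writing the product state $\psi_i=v_1^{(i)}\otimes w^{(i)}$ and $v_1^{(i)}=\alpha_i a+\beta_i\hat a$ yields $\eta_a^i=\alpha_i w^{(i)}$ and $\eta_{\hat a}^i=\beta_i w^{(i)}$. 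The main technical step I expect to be delicate is excluding a mixed first factor: if some $i_0$ had $\alpha_{i_0}\beta_{i_0}\neq 0$, orthogonality of both $\eta$-families would force $\langle w^{(i_0)}|w^{(j)}\rangle=0$ for every $j\neq i_0$, putting the two linearly independent vectors $a\otimes w^{(i_0)}$ and $\hat a\otimes w^{(i_0)}$ inside the one-dimensional space $\operatorname{span}\{\psi_j:j\neq i_0\}^\perp=\mathbb{C}\,\psi_{i_0}$, a contradiction.

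Hence every $v_1^{(i)}\in\{[a],[\hat a]\}$. Set $A=\{i:v_1^{(i)}=[a]\}$ and $B=\{i:v_1^{(i)}=[\hat a]\}$; orthogonality of $\{\eta_a^i\}_{i\in A}$ makes $\{w^{(i)}:i\in A\}$ an orthonormal set of product states in $\mathbb{C}^{2^{n-1}}$, hence a UOB on $n-1$ qubits (and similarly for $B$), forcing $|A|=|B|=2^{n-1}$. Conditioning the original LOCC protocol on each outcome of Alice's first measurement shows that these two sub-UOBs are themselves locally distinguishable, so by the inductive hypothesis each is maximum-dimensional; Theorem~\ref{thmonecolor} then identifies the full UOB as a maximum-dimensional component, completing the induction.
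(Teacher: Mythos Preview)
Your proof is correct and follows essentially the same inductive strategy as the paper: Theorem~\ref{thmonecolor} for ``$\Leftarrow$'' and Walgate--Hardy (Theorem~\ref{thmwh}) for ``$\Rightarrow$'', with the sub-UOBs $\mathcal{B}_1,\mathcal{B}_2$ handled by the inductive hypothesis and reassembled via the recursive description of the maximum-dimensional components.

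The one point worth noting is that the paper simply asserts in its first sentence that each product element $b\in\mathcal{B}$ has only one nonzero term in the Walgate--Hardy decomposition~\eqref{locdis}; you instead supply a genuine argument for this (the contradiction obtained from $\alpha_{i_0}\beta_{i_0}\neq 0$ forcing two independent vectors into a one-dimensional orthogonal complement). This step is not entirely trivial---a product state $v\otimes w$ with $v\notin\{[a],[\hat a]\}$ does have two nonzero terms in that decomposition, so something must be used to rule it out---and your justification is clean and correct. Otherwise the two proofs coincide.
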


\begin{proof} Let the UOB be $\mathcal{B}$. An element of $b \in \mathcal{B}$ only has  one term in the sum in~\eqref{locdis}, either 
$b =  a\otimes \eta_a^i$, or $b = \hat{a} \otimes \eta_{\hat a}^i$.

Assume  $\mathcal{B}$ is from  the  family of UOB with maximal dimension. Let us show local distinguishability of $\mathcal{B}$. By Theorem~\ref{thmonecolor},   the form of  $\mathcal{B}$ is 
\begin{equation*}
\mathcal{B} = 
\{
[a] \otimes \mathcal{B}_1 , [\widehat{a}] \otimes \mathcal{B}_2
\} ,
\end{equation*}
where $\mathcal{B}_1$ and $\mathcal{B}_2$ are from  the maximal dimensional family of   UOB in ${(\mathbb{C}^2)}^{\otimes (n-1)}$. By induction then, we can assume that $\mathcal{B}_1$ and $\mathcal{B}_2$ are locally distinguishable, i.e., the conclusion is true for  $n-1$. Then local distinguishability of $\mathcal{B}$ (for $n$) follows by  Theorem~\ref{thmwh}. 

For the converse, assume local distinguishability of the UOB $\mathcal{B}$. Then by Theorem~\ref{thmwh}, the form of  $\mathcal{B}$ is 
\begin{equation} \label{formB}
\mathcal{B} = 
\{
[a] \otimes \mathcal{B}_1 , [\widehat{a}] \otimes \mathcal{B}_2
\} ,
\end{equation}
where $[a]$ is in the factor measured first, and $\mathcal{B}_1$ and $\mathcal{B}_2$ are some  UOB in the factors, ${(\mathbb{C}^2)}^{\otimes (n-1)}$, measured afterwards . Local distinguishability of $\mathcal{B}$ implies that of $\mathcal{B}_1$ and $\mathcal{B}_2$. By induction, $\mathcal{B}_1$ and $\mathcal{B}_2$ are from the  the maximal dimensional  family of UOB. Then by  dimension count in~\eqref{formB}, $\mathcal{B}$ is also from the maximal dimensional family of UOB in ${(\mathbb{C}^2)}^{\otimes n}$.
\end{proof}

This can also be seen as a direct consequence of Theorem 6 in~\cite{W:Gleason} and  substantiates our claims.
A sightly stronger result on {\it asymptotic} distinguishability is obtained from~\cite{kkb:apdlocc}, which allows the parties to have infinite resources and arbitrarily long times in their LOCC protocol.  Proposition $2$ of~\cite{kkb:apdlocc} implies that even under asymptotic LOCC, perfect discrimination is only possible for the UOB that belong to the maximum dimensional family.

\section{Constructions of maximal UOB not distinguishable by LOCC} \label{sec:nonlocc}

By now,  we know that the only UOB that are LOCC distinguishable belong to the maximum dimensional family, and we know which UOB belong to this family. Next we turn to constructions of maximal UOB for $n$ qubits, that are not from the maximum dimensional family. Such UOB give us families that are not distinguishable by LOCC, and therefore are the ones most useful in secure communication protocols like QKD. 

Equivalently we are looking for
maximally colored cubes with less than
the maximum number of colors.  We saw the
maximally colored $Q_3$ with 6 colors.  Let us construct an analogous
coloring on $Q_n$ for $n \geq 4$.

First, we color $Q_n$ 
with only two colors,
and call them `dominant' and `non-dominant'.  We color according to the rule
that every 2-face has to have 3 edges `dominant' and 1 edge `non-dominant'.
It is not hard to prove that once we pick a single `non-dominant' edge,
then the coloring of an $n$-cube
is forced up to mirror symmetry.  Each direction in the 3-cube
has 1 `non-dominant' and 3 `dominant' edges.  In the 4-cube, each
direction has 2 `non-dominant' and 6 `dominant' edges, and this process
can be continued for higher $n$.
We now replace the `dominant' color
with $n$ distinct colors, one for each direction.  The edges 
previously colored with `non-dominant', we color each with a distinct color,
and it can be checked the resulting coloring is admissible.
We obtain a maximal coloring, which can be shown just by considering
the 2-faces: Changing a proper subset of the `dominant' colors in a single
direction to a new color would break the admissibility condition for some
2-face.
We obtain $n(2^{n-3}+1)$ colors, which is less than the maximal number
of colors possible.  Therefore we obtain a family of UOB not distinguishable by LOCC
for every $n$.
Following the procedure for $Q_4$ we obtain the a UOB of the form:
\begin{equation*}
\begin{aligned}
&
[u_{4}]\otimes[u_{3}]\otimes[u_{2}]\otimes[u_{1}], & &
[u_{4}]\otimes[u_{6}]\otimes[u_{5}]\otimes[\hat{u}_{1}], 
\\
&
[u_{4}]\otimes[u_{3}]\otimes[\hat{u}_{2}]\otimes[u_{1}], & &
[u_{7}]\otimes[u_{3}]\otimes[\hat{u}_{5}]\otimes[\hat{u}_{1}], 
\\
&
[u_{8}]\otimes[\hat{u}_{3}]\otimes[ u_{5}]\otimes[u_{1}], & &
[u_{4}]\otimes[\hat{u}_{6}]\otimes[ u_{5}]\otimes[\hat{u}_{1}], 
\\
&
[u_{4}]\otimes[\hat{u}_{3}]\otimes[\hat{u}_{5}]\otimes[u_{9}], & &
[u_{4}]\otimes[\hat{u}_{3}]\otimes[\hat{u}_{5}]\otimes[\hat{u}_{9}],
\\
&
[\hat{u}_{4}]\otimes[u_{3}]\otimes[u_{5}]\otimes[u_{10}], & &
[\hat{u}_{4}]\otimes[u_{3}]\otimes[u_{5}]\otimes[\hat{u}_{10}], 
\\
&
[\hat{u}_{4}]\otimes[u_{11}]\otimes[\hat{u}_{5}]\otimes[u_{1}], & &
[\hat{u}_{7}]\otimes[u_{3}]\otimes[\hat{u}_{5}]\otimes[\hat{u}_{1}], 
\\
&
[\hat{u}_{8}]\otimes[\hat{u}_{3}]\otimes[ u_{5}]\otimes[u_{1}], & &
[\hat{u}_{4}]\otimes[\hat{u}_{3}]\otimes[ u_{12}]\otimes[\hat{u}_{1}], 
\\
&
[\hat{u}_{4}]\otimes[\hat{u}_{11}]\otimes[\hat{u}_{5}]\otimes[u_{1}], & &
[\hat{u}_{4}]\otimes[\hat{u}_{3}]\otimes[\hat{u}_{12}]\otimes[\hat{u}_{1}].
\end{aligned}
\end{equation*}
This $4$-qubit UOB is similar to the $3$-qubit example in FIG.~\ref{fig2}; it has $3$ colors in each direction distributed so that there are $6$ edges of one color and $1$ edge each of the other $2$ colors.

We can, in fact, construct   a large supply of  maximal families.
Let us start with a generalization of the construction we already used
to construct the maximal dimensional component.
Start with two UOBs
$\{ b_1,\ldots,b_{N} \}$ and $\{ c_1,\ldots,c_{N} \}$ where $N=2^{n-1}$,
with 
$m$ and $k$ distinct vectors (colors) respectively.  Let
$a$  be any unit vector in $\mathbb{C}^2$, and construct the UOB
\begin{equation*}
a \otimes b_1,
\ldots ,
a \otimes b_N,
\hat{a} \otimes c_1,
\ldots ,
\hat{a} \otimes c_N.
\end{equation*}
This UOB uses $m+k+1$ distinct vectors (colors).  If we start with at least
one of the UOBs being not LOCC distinguishable, that is, not part of the
maximal dimensional family, we again obtain a non-distinguishable family.

In terms of cubes, the above construction colors the $Q_n$ so that one
direction has a unique color.  Conversely it is not hard to see that if
one direction has a unique color, the two $Q_{n-1}$ which this direction
separates are then colored with distinct colors if the coloring is to be
maximal.  

We can also reverse the idea.  Instead of making the new factor use only
one vector, we can also use as many distinct vectors as possible in the new
factor.  Take a single UOB
$\{ b_1,\ldots,b_{N} \}$ with $m$ distinct vectors, $N=2^{n-1}$.  Then take
$N$ distinct vectors $a_1,\ldots,a_N \in \mathbb{C}^2$ and construct a new UOB
\begin{equation*}
a_1 \otimes b_1,
\ldots ,
a_N \otimes b_N,
\hat{a}_1 \otimes b_1,
\ldots ,
\hat{a}_N \otimes b_N.
\end{equation*}
The number of distinct vectors used is then $m+N = m+2^{n-1}$.
Again, if we start with a UOB not in the maximal dimensional component
we again obtain a nondistinguishable UOB.

%

As a remark, one may 
ask for the minimal number of colors in a maximally colored $Q_n$.
That is, the dimension of the lowest dimensional component of
UOB.  Let us call this number $C(n)$.  Using the constructions above and an
induction argument we leave it to the reader to prove:
\begin{equation*}
\begin{gathered}
C(2) = 3, \qquad C(3)=6, \\
2n \leq C(n) \leq 13(2^{n-4})-1 \quad \text{(if $n \geq 4$)} .
\end{gathered}
\end{equation*}

\section{Discussion}\label{sec:disc}

In this paper, we  presented several ideas and results  pertaining UOB for systems of $n$ qubits. To systematize our search for UOB, we began with  drawing a connection between UOB and colorings of an $n$-dimensional hypercube. This led us to the definition of an {\it admissible} coloring and a  partial order on such colorings.  The maximal elements of this order define families of UOB of dimensions corresponding to their number of colors.   Each coloring defines  a forest of colors, such that the maximum dimensional family corresponds to a single tree of $2^n -1$ colors (dimension of the family). This  gave us a complete characterization of the maximum dimensional family, and its structure. Knowing the structure it is apparent through a result of Walgate and Hardy~\cite{wh:nadbs} that the only LOCC distinguishable UOB belong to this family. 

From the perspective of secure communication, like the QKD protocols, it is the UOB that are {\it not} LOCC distinguishable that exhibit the nonlocality requisite in the success of the  protocols. The generic UOB being LOCC distinguishable, we constructed examples of maximal families of UOB   of dimensions less that the maximum. We generalized the earliest examples of such UOB (for $n=3$) in~\cite{bdf:qnwe} to arbitrary number $n$ of qubits,  and described other constructions that build maximal families from known ones. 

This leaves open certain immediate questions.  A complete characterization of all the families of UOB is the strongest of them. Short of that, it would be interesting to know what is the lower bound on the dimension of a maximal UOB. In the domain of applications, perhaps  more interesting secure communication protocols may be possible by employing these results. It would be very useful if the ideas we have presented 
could be extended directly as  tools to analyze UOB  for systems of qudits. Unfortunately given $[u] \in {\mathbb P}^{d-1}$  there is not a unique orthogonal
$[\hat{u}] \in {\mathbb P}^{d-1}$ if $d>2$. To address this ambiguity would require encoding further structure. 

\begin{acknowledgments}
The authors thank Gilad Gour for pointing out the work of~\cite{bdf:qnwe} and~\cite{scj:lsmco} on nonlocality without entanglement and for his patient explanation of LOCC to the third named  author. They also thank Nathaniel Johnston for his comments, particularly for making them aware of asymptotic distinguishability from~\cite{kkb:apdlocc}.
\end{acknowledgments}

%


\begin{thebibliography}{12}%
\makeatletter
\providecommand \@ifxundefined [1]{%
 \@ifx{#1\undefined}
}%
\providecommand \@ifnum [1]{%
 \ifnum #1\expandafter \@firstoftwo
 \else \expandafter \@secondoftwo
 \fi
}%
\providecommand \@ifx [1]{%
 \ifx #1\expandafter \@firstoftwo
 \else \expandafter \@secondoftwo
 \fi
}%
\providecommand \natexlab [1]{#1}%
\providecommand \enquote  [1]{``#1''}%
\providecommand \bibnamefont  [1]{#1}%
\providecommand \bibfnamefont [1]{#1}%
\providecommand \citenamefont [1]{#1}%
\providecommand \href@noop [0]{\@secondoftwo}%
\providecommand \href [0]{\begingroup \@sanitize@url \@href}%
\providecommand \@href[1]{\@@startlink{#1}\@@href}%
\providecommand \@@href[1]{\endgroup#1\@@endlink}%
\providecommand \@sanitize@url [0]{\catcode `\\12\catcode `\$12\catcode
  `\&12\catcode `\#12\catcode `\^12\catcode `\_12\catcode `\%12\relax}%
\providecommand \@@startlink[1]{}%
\providecommand \@@endlink[0]{}%
\providecommand \url  [0]{\begingroup\@sanitize@url \@url }%
\providecommand \@url [1]{\endgroup\@href {#1}{\urlprefix }}%
\providecommand \urlprefix  [0]{URL }%
\providecommand \Eprint [0]{\href }%
\providecommand \doibase [0]{http://dx.doi.org/}%
\providecommand \selectlanguage [0]{\@gobble}%
\providecommand \bibinfo  [0]{\@secondoftwo}%
\providecommand \bibfield  [0]{\@secondoftwo}%
\providecommand \translation [1]{[#1]}%
\providecommand \BibitemOpen [0]{}%
\providecommand \bibitemStop [0]{}%
\providecommand \bibitemNoStop [0]{.\EOS\space}%
\providecommand \EOS [0]{\spacefactor3000\relax}%
\providecommand \BibitemShut  [1]{\csname bibitem#1\endcsname}%
\let\auto@bib@innerbib\@empty
\bibitem [{\citenamefont {Bennett}\ \emph {et~al.}(1999)\citenamefont
  {Bennett}, \citenamefont {DiVincenzo}, \citenamefont {Fuchs}, \citenamefont
  {Mor}, \citenamefont {Rains}, \citenamefont {Shor}, \citenamefont {Smolin},\
  and\ \citenamefont {Wootters}}]{bdf:qnwe}%
  \BibitemOpen
  \bibfield  {author} {\bibinfo {author} {\bibfnamefont {C.~H.}\ \bibnamefont
  {Bennett}}, \bibinfo {author} {\bibfnamefont {D.~P.}\ \bibnamefont
  {DiVincenzo}}, \bibinfo {author} {\bibfnamefont {C.~A.}\ \bibnamefont
  {Fuchs}}, \bibinfo {author} {\bibfnamefont {T.}~\bibnamefont {Mor}}, \bibinfo
  {author} {\bibfnamefont {E.}~\bibnamefont {Rains}}, \bibinfo {author}
  {\bibfnamefont {P.~W.}\ \bibnamefont {Shor}}, \bibinfo {author}
  {\bibfnamefont {J.~A.}\ \bibnamefont {Smolin}}, \ and\ \bibinfo {author}
  {\bibfnamefont {W.~K.}\ \bibnamefont {Wootters}},\ }\href@noop {} {\bibfield
  {journal} {\bibinfo  {journal} {Phys. Rev. A}\ }\textbf {\bibinfo {volume}
  {59}},\ \bibinfo {pages} {1070} (\bibinfo {year} {1999})}\BibitemShut
  {NoStop}%
\bibitem [{\citenamefont {DiVincenzo}\ \emph {et~al.}(2003)\citenamefont
  {DiVincenzo}, \citenamefont {Mor}, \citenamefont {Shor}, \citenamefont
  {Smolin},\ and\ \citenamefont {Terhal}}]{dms:upbbe}%
  \BibitemOpen
  \bibfield  {author} {\bibinfo {author} {\bibfnamefont {D.~P.}\ \bibnamefont
  {DiVincenzo}}, \bibinfo {author} {\bibfnamefont {T.}~\bibnamefont {Mor}},
  \bibinfo {author} {\bibfnamefont {P.~W.}\ \bibnamefont {Shor}}, \bibinfo
  {author} {\bibfnamefont {J.~A.}\ \bibnamefont {Smolin}}, \ and\ \bibinfo
  {author} {\bibfnamefont {B.~M.}\ \bibnamefont {Terhal}},\ }\href@noop {}
  {\bibfield  {journal} {\bibinfo  {journal} {Communications in Mathematical
  Physics}\ }\textbf {\bibinfo {volume} {238}},\ \bibinfo {pages} {379}
  (\bibinfo {year} {2003})}\BibitemShut {NoStop}%
\bibitem [{\citenamefont {Bandyopadhyay}\ \emph {et~al.}(2014)\citenamefont
  {Bandyopadhyay}, \citenamefont {Cosentino}, \citenamefont {Johnston},
  \citenamefont {Russo}, \citenamefont {Watrous},\ and\ \citenamefont
  {Nengkun}}]{scj:lsmco}%
  \BibitemOpen
  \bibfield  {author} {\bibinfo {author} {\bibfnamefont {S.}~\bibnamefont
  {Bandyopadhyay}}, \bibinfo {author} {\bibfnamefont {A.}~\bibnamefont
  {Cosentino}}, \bibinfo {author} {\bibfnamefont {N.}~\bibnamefont {Johnston}},
  \bibinfo {author} {\bibfnamefont {V.}~\bibnamefont {Russo}}, \bibinfo
  {author} {\bibfnamefont {J.}~\bibnamefont {Watrous}}, \ and\ \bibinfo
  {author} {\bibfnamefont {Y.}~\bibnamefont {Nengkun}},\ }\href@noop {}
  {\bibfield  {journal} {\bibinfo  {journal} {arXiv:1408.6981 [quant-ph]}\ }
  (\bibinfo {year} {2014})}\BibitemShut {NoStop}%
\bibitem [{\citenamefont {Walgate}\ and\ \citenamefont
  {Hardy}(2002)}]{wh:nadbs}%
  \BibitemOpen
  \bibfield  {author} {\bibinfo {author} {\bibfnamefont {J.}~\bibnamefont
  {Walgate}}\ and\ \bibinfo {author} {\bibfnamefont {L.}~\bibnamefont
  {Hardy}},\ }\href@noop {} {\bibfield  {journal} {\bibinfo  {journal} {Phys.
  Rev. Lett.}\ }\textbf {\bibinfo {volume} {89}},\ \bibinfo {pages} {147901}
  (\bibinfo {year} {2002})}\BibitemShut {NoStop}%
\bibitem [{\citenamefont {Goldenberg}\ and\ \citenamefont
  {Vaidman}(1995)}]{vg:qcbos}%
  \BibitemOpen
  \bibfield  {author} {\bibinfo {author} {\bibfnamefont {L.}~\bibnamefont
  {Goldenberg}}\ and\ \bibinfo {author} {\bibfnamefont {L.}~\bibnamefont
  {Vaidman}},\ }\href@noop {} {\bibfield  {journal} {\bibinfo  {journal}
  {Phys.Rev.Lett.}\ }\textbf {\bibinfo {volume} {75}},\ \bibinfo {pages} {1239}
  (\bibinfo {year} {1995})}\BibitemShut {NoStop}%
\bibitem [{\citenamefont {Guo}\ \emph {et~al.}(2001)\citenamefont {Guo},
  \citenamefont {Li}, \citenamefont {Shi}, \citenamefont {Li},\ and\
  \citenamefont {Guo}}]{gls:nqkds}%
  \BibitemOpen
  \bibfield  {author} {\bibinfo {author} {\bibfnamefont {G.~P.}\ \bibnamefont
  {Guo}}, \bibinfo {author} {\bibfnamefont {C.~F.}\ \bibnamefont {Li}},
  \bibinfo {author} {\bibfnamefont {B.~S.}\ \bibnamefont {Shi}}, \bibinfo
  {author} {\bibfnamefont {J.}~\bibnamefont {Li}}, \ and\ \bibinfo {author}
  {\bibfnamefont {G.~C.}\ \bibnamefont {Guo}},\ }\href {\doibase
  10.1103/PhysRevA.64.042301} {\bibfield  {journal} {\bibinfo  {journal} {Phys.
  Rev. A}\ }\textbf {\bibinfo {volume} {64}},\ \bibinfo {pages} {042301}
  (\bibinfo {year} {2001})}\BibitemShut {NoStop}%
\bibitem [{\citenamefont {He}(2011)}]{h:qkdosbc}%
  \BibitemOpen
  \bibfield  {author} {\bibinfo {author} {\bibfnamefont {G.}~\bibnamefont
  {He}},\ }\href {\doibase 10.1088/1751-8113/44/44/445305} {\bibfield
  {journal} {\bibinfo  {journal} {J. Phys. A: Math. Theor.}\ }\textbf {\bibinfo
  {volume} {44}} (\bibinfo {year} {2011}),\
  10.1088/1751-8113/44/44/445305}\BibitemShut {NoStop}%
\bibitem [{\citenamefont {Chen}\ and\ \citenamefont
  {Johnston}(2015)}]{cj:msupbbc}%
  \BibitemOpen
  \bibfield  {author} {\bibinfo {author} {\bibfnamefont {J.}~\bibnamefont
  {Chen}}\ and\ \bibinfo {author} {\bibfnamefont {N.}~\bibnamefont
  {Johnston}},\ }\href {\doibase 10.1007/s00220-014-2186-7} {\bibfield
  {journal} {\bibinfo  {journal} {Comm. Math. Phys.}\ }\textbf {\bibinfo
  {volume} {333}},\ \bibinfo {pages} {351 } (\bibinfo {year}
  {2015})}\BibitemShut {NoStop}%
\bibitem [{\citenamefont {Zhang}\ \emph {et~al.}(2014)\citenamefont {Zhang},
  \citenamefont {Gao}, \citenamefont {Tian}, \citenamefont {Cao},\ and\
  \citenamefont {Wen}}]{zgt:nopbqs}%
  \BibitemOpen
  \bibfield  {author} {\bibinfo {author} {\bibfnamefont {Z.~C.}\ \bibnamefont
  {Zhang}}, \bibinfo {author} {\bibfnamefont {F.}~\bibnamefont {Gao}}, \bibinfo
  {author} {\bibfnamefont {G.~J.}\ \bibnamefont {Tian}}, \bibinfo {author}
  {\bibfnamefont {T.~Q.}\ \bibnamefont {Cao}}, \ and\ \bibinfo {author}
  {\bibfnamefont {Q.~Y.}\ \bibnamefont {Wen}},\ }\href {\doibase
  10.1103/PhysRevA.90.022313} {\bibfield  {journal} {\bibinfo  {journal} {Phys.
  Rev. A}\ }\textbf {\bibinfo {volume} {90}},\ \bibinfo {pages} {022313}
  (\bibinfo {year} {2014})}\BibitemShut {NoStop}%
\bibitem [{\citenamefont {Feng}\ and\ \citenamefont {Shi}(2009)}]{fs:cliops}%
  \BibitemOpen
  \bibfield  {author} {\bibinfo {author} {\bibfnamefont {Y.}~\bibnamefont
  {Feng}}\ and\ \bibinfo {author} {\bibfnamefont {Y.}~\bibnamefont {Shi}},\
  }\href {\doibase 10.1109/TIT.2009.2018330} {\bibfield  {journal} {\bibinfo
  {journal} {Information Theory, IEEE Transactions on}\ }\textbf {\bibinfo
  {volume} {55}},\ \bibinfo {pages} {2799 } (\bibinfo {year}
  {2009})}\BibitemShut {NoStop}%
\bibitem [{\citenamefont {Wallach}(2002)}]{W:Gleason}%
  \BibitemOpen
  \bibfield  {author} {\bibinfo {author} {\bibfnamefont {N.~R.}\ \bibnamefont
  {Wallach}},\ }\href@noop {} {\bibfield  {journal} {\bibinfo  {journal}
  {Contemp.\ Math.}\ }\textbf {\bibinfo {volume} {305}},\ \bibinfo {pages}
  {291} (\bibinfo {year} {2002})}\BibitemShut {NoStop}%
\bibitem [{\citenamefont {Kleinmann}\ \emph {et~al.}(2011)\citenamefont
  {Kleinmann}, \citenamefont {Kampermann},\ and\ \citenamefont
  {Bru\ss{}}}]{kkb:apdlocc}%
  \BibitemOpen
  \bibfield  {author} {\bibinfo {author} {\bibfnamefont {M.}~\bibnamefont
  {Kleinmann}}, \bibinfo {author} {\bibfnamefont {H.}~\bibnamefont
  {Kampermann}}, \ and\ \bibinfo {author} {\bibfnamefont {D.}~\bibnamefont
  {Bru\ss{}}},\ }\href {\doibase 10.1103/PhysRevA.84.042326} {\bibfield
  {journal} {\bibinfo  {journal} {Phys. Rev. A}\ }\textbf {\bibinfo {volume}
  {84}},\ \bibinfo {pages} {042326} (\bibinfo {year} {2011})}\BibitemShut
  {NoStop}%
\end{thebibliography}

\end{document}